\newtheorem{theorem}{Theorem}[section]
\newtheorem{lemma}[theorem]{Lemma}
\newtheorem{claim}[theorem]{Claim}
\newtheorem{proposition}[theorem]{Proposition}
\newtheorem{observation}[theorem]{Observation}
\newenvironment{claimproof}[1]{\par\noindent\emph{Proof:}\space#1}{{\leavevmode\unskip\penalty9999 \hbox{}\nobreak\hfill\quad\hbox{$\diamondsuit$}}}
\newcommand{\ignore}[1]{}
\let\OLDthebibliography\thebibliography
\renewcommand\thebibliography[1]{
  \OLDthebibliography{#1}
  \setlength{\parskip}{2pt}
  \setlength{\itemsep}{3pt plus 0.3ex}
}
\begin{document}

\title{Maximizing the Strong Triadic Closure in Split Graphs and Proper Interval Graphs}

\author{
Athanasios Konstantinidis\thanks{Department of Mathematics, University of Ioannina, Greece. E-mail: \texttt{konsakis@yahoo.com}}
\and
Charis Papadopoulos\thanks{Department of Mathematics, University of Ioannina, Greece. E-mail:  \texttt{charis@cs.uoi.gr}}
}

\date{}

\pagestyle{plain}
\maketitle

\begin{abstract}
In social networks the {\sc Strong Triadic Closure} is an assignment of the edges with strong or weak labels such that any two vertices that have a common neighbor with a strong edge are adjacent.
The problem of maximizing the number of strong edges that satisfy the strong triadic closure was recently shown to be NP-complete for general graphs.
Here we initiate the study of graph classes for which the problem is solvable.
We show that the problem admits a polynomial-time algorithm for two unrelated classes of graphs: proper interval graphs and trivially-perfect graphs.
To complement our result, we show that the problem remains NP-complete on split graphs, and consequently also on chordal graphs.
Thus we contribute to define the first border between graph classes on which the problem is polynomially solvable and on which it remains NP-complete.
\end{abstract}

\section{Introduction}
Predicting the behavior of a network is an important concept in the field of social networks \cite{EK10}.
Understanding the strength and nature of social relationships has found an increasing usefulness in the last years due to the explosive growth of social networks (see e.g., \cite{BK14}). 
Towards such a direction the {\sc Strong Triadic Closure} principle enables to understand the structural properties of the underlying graph:
it is not possible for two individuals to have a strong relationship with a common friend and not know each other \cite{Go73}.
Such a principle stipulates that if two people in a social network have a ``strong friend'' in common, then there is an increased likelihood that they will become friends themselves at some point in the future.
Satisfying the {\sc Strong Triadic Closure} is to characterize the edges of the underlying graph into weak and strong such that any two vertices that have a strong neighbour in common are adjacent.
Since users interact and actively engage in social networks by creating strong relationships, it is natural to consider the {\sc MaxSTC} problem: maximize the number of strong edges that satisfy the {\sc Strong Triadic Closure}.
The problem has been shown to be NP-complete for general graphs whereas its dual problem of minimizing the number of weak edges admits a constant factor approximation ratio \cite{ST14}.

In this work we initiate the computational complexity study of the {\sc MaxSTC} problem in important classes of graphs.
If the input graph is a $P_3$-free graph (i.e., a graph having no induced path on three vertices which is equivalent with a graph that consists of vertex-disjoint union of cliques) then there is a trivial solution by labeling strong all the edges.
Such an observation might falsely lead into a graph modification problem, known as {\sc Cluster Deletion} problem (see e.g., \cite{BDM15,HLNPT10}), in which we want to remove the minimum number of edges that correspond to the weak edges, such that the resulting graph does not contain a $P_3$ as an induced subgraph.
More precisely the obvious reduction would consist in labeling the deleted edges in the instance of {\sc Cluster Deletion} as weak, and the remaining ones as strong.
However, this reduction fails to be correct due to the fact that the graph obtained by deleting the weak edges in an optimal solution of {\sc MaxSTC} may contain
an induced $P_3$, so long as those three vertices induce a triangle in the original graph (prior to deleting the weak edges).

To the best of our knowledge, no previous results were known prior to our work when restricting the input graph for the {\sc MaxSTC} problem.
It is not difficult to see that for bipartite graphs the {\sc MaxSTC} problem has a simple polynomial-time solution by considering a maximum matching that represent the strong edges \cite{LR88}.
It is well-known that a maximum matching of a graph corresponds to a maximum independent set of its line graph that represents the adjacencies between the edges \cite{Edmonds65}.
As previously noted, for general graphs it is not necessarily the case that a maximum matching corresponds to the optimal solution for {\sc MaxSTC}.
Here we show a similar characterization for {\sc MaxSTC} by considering the adjacencies between the edges of a graph that participate in induced $P_3$'s.
Such a characterization allows us to exhibit structural properties towards the computation of an optimal solution of {\sc MaxSTC}.

Due to the nature of the $P_3$ existence that enforce the labeling of weak edges, there is an interesting connection to problems related to the {\it square root} of a graph;
a graph $H$ is a {\it square root} of a graph $G$ and $G$ is the {\it square} of $H$ if two vertices are adjacent in $G$ whenever they are at distance one or two in $H$.
Any graph does not have a square root (for example consider a simple path), but every graph contains a subgraph that has a square root.
Although it is NP-complete to determine if a given chordal graph has a square root \cite{LC04}, there are polynomial-time algorithms when the input is restricted to bipartite graphs \cite{Lau06}, or proper interval graphs \cite{LC04}, or trivially-perfect graphs \cite{MS13}.
The relationship between {\sc MaxSTC} and to that of determining square roots can be seen as follows.
In the {\sc MaxSTC} problem we are given a graph $G$ and we want to select the maximum possible number of edges, at most one from each induced $P_3$ in $G$.
Thus we need to find the largest subgraph (in terms of the number of its edges) $H$ of $G$ such that the square of $H$ is a subgraph of $G$.
However previous results related to square roots were concerned with deciding if the whole graph has a square root and there are no such equivalent formulations related to the largest square root.

Our main motivation is to understand the complexity of the problem on subclasses of chordal graphs, since the class of chordal graphs (i.e., graphs having no chordless cycle of length at least four) finds important applications in both theoretical and practical areas related to social networks \cite{ASM16,KleitmanV90,Pfaltz13}.
More precisely two famous properties can be found in social networks.
For most known social and biological networks their diameter, that is, the length of the longest shortest path between any two vertices of a graph, is known to be a small constant \cite{Jackson08}.
On the other hand it has been shown that the most prominent social network subgraphs are cliques, whereas highly infrequent induced subgraphs are cycles of length four \cite{UBK13}.
Thus it is evident that subclasses of chordal graphs are close related to such networks, since they have rather small diameter (e.g., split graphs or trivially-perfect graphs) and are characterized by the absence of chordless cycles (e.g., proper interval graphs).
Towards such a direction we show that {\sc MaxSTC} is NP-complete on split graphs and consequently also on chordal graphs.
On the positive side, we present the first polynomial-time algorithm for computing {\sc MaxSTC} on proper interval graphs.
Proper interval graphs, also known as unit interval graphs or indifference graphs, form a subclass of interval graphs and they are unrelated to split graphs \cite{R69}.
By our result they form the first graph class, other than bipartite graphs, for which {\sc MaxSTC} is shown to be polynomial time solvable.
In order to obtain our algorithm, we take advantage of their clique path and apply a dynamic programming on subproblems defined by passing the clique path in its natural ordering.
Furthermore by considering the equivalent transformation of the problem mentioned earlier, we show that {\sc MaxSTC} admits a simple polynomial-time solution on trivially-perfect graphs.
Thus we contribute to define the first borderline between graph classes on which the problem is polynomially solvable and on which it remains NP-complete.


\section{Preliminaries}
All graphs considered here are simple and undirected.
A graph is denoted by $G=(V,E)$ with vertex set $V$ and edge set $E$. We use
the convention that $n=|V|$ and $m=|E|$.
The {\it neighborhood} of
a vertex~$v$ of $G$ is $N(v)=\{x \mid vx \in E\}$ and the
{\it closed neighborhood} of $v$ is $N[v] = N(v) \cup \{v\}$.
For $S \subseteq V$, $N(S)=\bigcup_{v \in S} N(v) \setminus S$ and $N[S] = N(S) \cup S$.
A graph~$H$ is a {\it subgraph} of $G$ if $V(H)\subseteq V(G)$
and $E(H)\subseteq E(G)$. For $X\subseteq V(G)$, the subgraph
of $G$ {\it induced} by $X$, $G[X]$, has vertex set~$X$, and
for each vertex pair~$u, v$ from $X$, $uv$ is an edge of $G[X]$
if and only if $u\not= v$ and $uv$ is an edge of $G$.
For $R\subseteq E(G)$, $G\setminus R$ denotes the
graph~$(V(G), E(G)\setminus R)$, that is a subgraph of $G$ and
for $S \subseteq V(G)$, $G - S$ denotes the
graph~$G[V(G)-S]$, that is an induced subgraph of $G$.
Two adjacent vertices $u$ and $v$ are called {\it twins} if $N[u] = N[v]$.

A {\it clique} of $G$ is a set of pairwise adjacent vertices
of $G$, and a {\it maximal clique} of $G$ is a clique of $G$
that is not properly contained in any clique of $G$. An
{\it independent set} of $G$ is a set of pairwise non-adjacent
vertices of $G$.
For $k\geq 2$, the chordless path on $k$ vertices is denoted by $P_k$ and the
chordless cycle on $k$ vertices is denoted by $C_k$.
For an induced path $P_k$, the vertices of degree one are called endvertices.
A vertex $v$ is {\it universal} in $G$ if $N[v] = V(G)$ and $v$ is {\it isolated} if $N(v) = \emptyset$.
For two vertices $u$ and $v$ we say that $u$ {\it sees} $v$ if $\{u,v\}\in E(G)$; otherwise, we say that $u$ \emph{misses} $v$.
We extend this notion to vertex sets: a set $A$ sees (resp., misses) a vertex set $B$ if every vertex of $A$ is adjacent (resp., non-adjacent) to every vertex of $B$.
We say that two {\it edges are non-adjacent} if they have no common endpoint; otherwise we call them {\it adjacent edges}.

\begin{figure}[t]
\centering
\includegraphics[scale= 1.0]{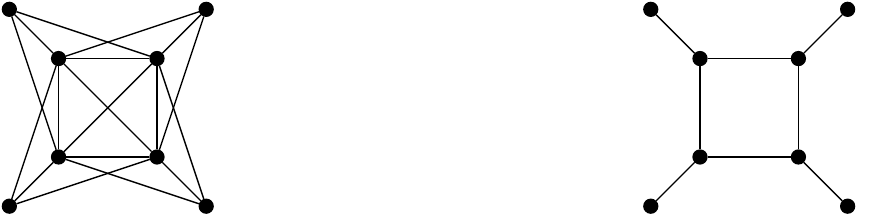}
\caption{A split graph $G$ is shown to the left side.
The right side depicts a solution for {\sc MaxSTC} on $G$ where the weak edges are exactly the edges of $G$ that are not shown.}
\label{fig:split}
\end{figure}

\paragraph{Strong Triadic Closure.}
Given a graph $G = (V,E)$, a {\it strong-weak labeling} on the edges of $G$ is a function $\lambda$
that assigns to each edge of $E(G)$ one of the labels {\it strong} or {\it weak}; i.e., $\lambda: E(G) \rightarrow \{\rm{strong},\rm{weak}\}$.
An edge that is labeled strong (resp., weak) is simple called {\it strong} (resp. {\it weak}).
The {\it strong triadic closure} of a graph $G$ is a strong-weak labeling $\lambda$ such that
for any two strong edges $\{u,v\}$ and $\{v,w\}$ there is a (weak or strong) edge $\{u,w\}$.
In other words, in a strong triadic closure there are no pair of strong edges $\{u,v\}$ and $\{v,w\}$ such that $\{u,w\} \notin E(G)$.

The problem of computing the maximum strong triadic closure, denoted by {\sc MaxSTC}, is to find a strong-weak labeling on the edges of $E(G)$ that satisfies the strong triadic closure and has the maximum number of strong edges.
Note that its dual problem asks for the minimum number of weak edges.
Here we focus on maximizing the number of strong edges in a strong triadic closure.

Let $G$ be a strong-weak labeled graph. We denote by $(E_S,E_W)$ the partition of $E(G)$ into strong edges $E_S$ and weak edges $E_W$.
The graph spanned by $E_S$ is the graph $G \setminus E_W$.
For a vertex $v \in V(G)$ we say that the \emph{strong neighbour} of $v$ is the other endpoint of a strong edge incident to $v$.
We denote by $N_S(v) \subseteq N(v)$ the strong neighbours of $v$.
Similarly we say that a vertex $u$ is \emph{strongly adjacent} to $v$ if $u$ is adjacent to $v$ and the edge $\{u,v\}$ is strong.

\begin{observation}\label{obs:stcP3Clique}
Let $G$ be a strong-weak labeled graph. Then $G \setminus E_W$ satisfies the strong triadic closure if and only if for every $P_3$ in $G \setminus E_W$, the vertices of the $P_3$ induce a $K_3$ in $G$.
\end{observation}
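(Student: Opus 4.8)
The plan is to prove both directions by directly unwinding the definition of the strong triadic closure, which (as stated just above) amounts to requiring that there is no pair of strong edges $\{u,v\}$ and $\{v,w\}$ with $\{u,w\} \notin E(G)$. The one point that demands care throughout is the distinction between an edge of $G \setminus E_W$ and an edge of $G$: an edge of $G \setminus E_W$ is precisely a strong edge, whereas an edge of $G$ may be strong or weak. In particular, a $P_3$ on vertices $u,v,w$ with center $v$ in $G \setminus E_W$ means exactly that $\{u,v\}$ and $\{v,w\}$ are strong while $\{u,w\}$ is \emph{not} strong; by contrast, the statement ``the vertices of the $P_3$ induce a $K_3$ in $G$'' asks only that $\{u,w\} \in E(G)$, i.e. that $\{u,w\}$ is present as a (possibly weak) edge of $G$.

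For the forward direction I would assume that $G \setminus E_W$ satisfies the strong triadic closure and take an arbitrary $P_3$ of $G \setminus E_W$, say on vertices $u,v,w$ with center $v$. Then $\{u,v\}$ and $\{v,w\}$ are two strong edges sharing the vertex $v$, so the closure property forces $\{u,w\} \in E(G)$. Combined with $\{u,v\},\{v,w\} \in E(G)$, this shows that $u,v,w$ induce a $K_3$ in $G$, as required.

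For the converse I would assume that every $P_3$ of $G \setminus E_W$ induces a $K_3$ in $G$ and verify the closure condition directly. Take any two strong edges $\{u,v\}$ and $\{v,w\}$ sharing $v$ (so $u \neq w$); the goal is to show $\{u,w\} \in E(G)$. If $\{u,w\}$ is itself strong, then it is already an edge of $G$ and there is nothing to prove. Otherwise $\{u,w\}$ is not strong, so $u,v,w$ form an induced $P_3$ in $G \setminus E_W$ with center $v$; by hypothesis these vertices induce a $K_3$ in $G$, which in particular yields $\{u,w\} \in E(G)$. In either case the closure condition holds, completing the proof.

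I do not expect a genuine obstacle here, since the statement is essentially a reformulation of the definition of the strong triadic closure in terms of induced $P_3$'s of the strong-edge subgraph. The only thing that requires attention is the bookkeeping described in the first paragraph, namely not conflating the strong-only subgraph $G \setminus E_W$ with the underlying graph $G$ when interpreting the two conditions; keeping these straight is what makes the converse direction split cleanly into the two cases above.
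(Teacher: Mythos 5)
Your proposal is correct and follows essentially the same route as the paper's own proof: both directions are obtained by directly unwinding the definition of the strong triadic closure, with the converse splitting on whether $\{u,w\}$ is itself strong. If anything, your explicit case distinction in the converse is slightly more careful than the paper's terse phrasing, but the argument is the same.
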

\begin{proof}
Observe that $G \setminus E_W$ is the graph spanned by the strong edges.
If for two strong edges $\{u,v\}$ and $\{v,w\}$, $\{u,w\} \notin E(G \setminus E_W)$ then $\{u,w\}$ is an edge in $G$ and, thus, $u,v,w$ induce a $K_3$ in $G$.
On the other hand notice that any two strong edges of $G \setminus E_W$ are either non-adjacent or share a common vertex.
If they share a common vertex then the vertices must induce a $K_3$ in $G$.
\end{proof}

Therefore in the {\sc MaxSTC} problem we want to minimize the number of the removal (weak) edges $E_W$ from $G$
such that every three vertices that induce a $P_3$ in $G \setminus E_W$ form a clique in $G$.
Then it is not difficult to see that $G \setminus E_W$ satisfies the strong triadic closure if and only if for every vertex $v$, $N_S[v]$ induces a clique in $G$.



\section{MaxSTC on split graphs}
Here we provide an NP-hardness result for {\sc MaxSTC} on split graphs.
A graph $G=(V,E)$ is a {\it split graph} if $V$ can be partitioned into a
clique $C$ and an independent set $I$, where $(C,I)$ is called a
{\it split partition} of $G$.
Split graphs form a subclass of the larger and
widely known graph class of {\it chordal graphs}, which are the graphs that do not contain
induced cycles of length 4 or more as induced subgraphs.
It is known that split graphs are self-complementary, that is, the complement of a split graph remains a split graph.
First we show the following result.

\begin{lemma}\label{lem:simplesplitcases}
Let $G=(V,E)$ be a split graph with a split partition $(C,I)$. Let $E_S$ be the set of strong edges in an optimal solution for {\sc MaxSTC} on $G$ and let $I_W$ be the vertices of $I$ that are incident to at least one edge of $E_S$.
\begin{enumerate}
\item If every vertex of $I_W$ misses at least three vertices of $C$ in $G$ then $E_S = E(C)$.
\item If every vertex of $I_W$ misses exactly one vertex of $C$ in $G$ then $|E_S| \leq |E(C)| + \lfloor\frac{|I_W|}{2}\rfloor$.
\end{enumerate}
\end{lemma}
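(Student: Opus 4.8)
The plan is to exploit two structural facts valid for any feasible labeling of a split graph, and then reduce both parts to a counting argument comparing $E_S$ against the baseline solution $E(C)$ (which labels every clique edge strong and everything else weak, and is easily seen to be feasible with exactly $|E(C)|$ strong edges). First I would record that every edge of $G$ is either a $C$-$C$ edge or a $C$-$I$ edge, since $I$ is independent; hence, writing $W^{CC}$ for the set of weak clique edges and $E_S^{CI}$ for the strong $C$-$I$ edges, we have $|E_S| = |E(C)| - |W^{CC}| + |E_S^{CI}|$, so everything hinges on bounding the ``gain'' $g := |E_S^{CI}| - |W^{CC}|$. Using the characterization that $N_S[v]$ is a clique for every $v$ (the remark following Observation~\ref{obs:stcP3Clique}), two facts follow: (i) each $c \in C$ has at most one strong neighbour in $I$, because two such neighbours would be non-adjacent vertices inside the clique $N_S[c]$; and (ii) if $cx$ is strong with $x \in I$, then for every $C$-vertex $m$ missed by $x$ the clique edge $cm$ must be weak, since otherwise $m \in N_S[c]$ while $mx \notin E$. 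By (i), the strong $C$-$I$ edges are in bijection with $A := \{c \in C : c \text{ has a strong } I\text{-neighbour}\}$, so $|E_S^{CI}| = |A|$.

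Next I would set up a charging from strong $C$-$I$ edges to forced weak clique edges. For each $c \in A$, with strong $I$-neighbour $x(c)$ missing the $C$-set $M_{x(c)}$, fact (ii) forces all edges $\{c\,m : m \in M_{x(c)}\}$ to be weak; I record these as directed arcs $c \to m$. The total number of arcs is $\sum_{c \in A} |M_{x(c)}|$, and the key observation is that each undirected weak clique edge underlies at most two arcs: an edge $\{u,w\}$ can be forced only as $u \to w$ (needing $u \in A$ and $w \in M_{x(u)}$) or as $w \to u$, and by (i) each of $u,w$ has a unique strong $I$-neighbour, so each direction occurs at most once. Writing $F \subseteq W^{CC}$ for the set of distinct forced weak edges, this gives $(\text{number of arcs}) \le 2|F| \le 2|W^{CC}|$.

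For Part 1, every $x \in I_W$ misses at least three $C$-vertices, so $|M_{x(c)}| \ge 3$ for each $c \in A$ and the number of arcs is at least $3|A|$; combined with the bound $2|W^{CC}|$ this yields $|W^{CC}| \ge \tfrac{3}{2}|A|$ and hence $g \le |A| - \tfrac{3}{2}|A| = -\tfrac{1}{2}|A| \le 0$. Since $E(C)$ is feasible and $E_S$ is optimal we also have $g = |E_S| - |E(C)| \ge 0$, so $g = 0$ and $A = \emptyset$. Then $W^{CC} = \emptyset$ as well, meaning every clique edge is strong and no $C$-$I$ edge is strong; that is exactly $E_S = E(C)$.

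For Part 2, every $x \in I_W$ misses exactly one $C$-vertex, so each $c \in A$ contributes a single arc $c \to m_{x(c)}$, the total number of arcs equals $|A|$, and counting distinct forced edges gives $|F| = |A| - D$, where $D$ is the number of \emph{doubly}-forced clique edges (those underlying two arcs). Hence $g \le |A| - |F| = D$, and it remains to show $D \le \lfloor |I_W|/2\rfloor$. Here is where I expect the main difficulty, and the argument I would use is the following. A doubly-forced edge $\{a,b\}$ satisfies $b = m_{x(a)}$ and $a = m_{x(b)}$; since $m_{x(a)} = b \ne a = m_{x(b)}$ forces $x(a) \ne x(b)$, each such edge is associated with two distinct vertices $x(a),x(b) \in I_W$. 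I would then argue that a fixed $w \in I_W$ lies in at most one doubly-forced edge: any doubly-forced edge containing $w$ in this role has the form $\{a', m_w\}$ with $a' \in N_S(w)$, and its partner condition requires $m_w \in A$ with $a' = m_{x(m_w)}$; since $m_w$ has a unique strong $I$-neighbour, $m_{x(m_w)}$ is a single vertex, pinning down $a'$ and hence the edge. Therefore the doubly-forced edges pair up disjoint vertices of $I_W$, giving $2D \le |I_W|$ and $D \le \lfloor |I_W|/2\rfloor$; with $g \le D$ this proves $|E_S| \le |E(C)| + \lfloor |I_W|/2\rfloor$. The crux throughout is fact (i), the uniqueness of each vertex's strong $I$-neighbour, which both caps the charging multiplicity at two and forces the doubly-charged edges to pair up the vertices of $I_W$.
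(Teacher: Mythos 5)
Your proof is correct, and it splits interestingly against the paper's. For Part~1 you are doing essentially what the paper does, in different bookkeeping: your directed arcs $c \to m$ together with the observation that each weak clique edge underlies at most two arcs is exactly the paper's double counting $2|E(A,B)| \geq \sum_{w_i \in I_W}|A_i||B_i|$ (your fact~(i), uniqueness of the strong $I$-neighbour of each clique vertex, is the paper's $A_i \cap A_j = \emptyset$), and comparing the gain $g$ against the feasible baseline $E(C)$ matches the paper's computation — indeed your explicit deficit $g \leq -\tfrac{1}{2}|A|$ is tidier than the paper's displayed bound, whose floor $\lfloor |B_i|/2\rfloor$ gives coefficient $0$ at $|B_i|=3$, whereas the strictly negative $1 - \tfrac{3}{2}$ implicit in your version is what the conclusion $I_W = \emptyset$ actually needs. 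For Part~2 you take a genuinely different route: the paper first normalizes the optimal solution by an exchange argument (relabeling the edges from $w_i$ to $A_i \setminus \{v\}$ weak and those from $b_i$ to $A_i \setminus \{v\}$ strong, with a feasibility check for the modified labeling) to force $|A_i| = 1$ for every $w_i \in I_W$, and only then substitutes into the counting formula; you instead keep the labeling fixed, bound the gain by the number $D$ of doubly-forced weak clique edges, and prove $D \leq \lfloor |I_W|/2\rfloor$ by your pairing argument — a doubly-forced edge $\{a,b\}$ determines the unordered pair of distinct vertices $\{x(a),x(b)\} \subseteq I_W$, and each $w \in I_W$ lies in at most one such pair because any doubly-forced edge associated with $w$ must equal $\{m_{x(m_w)}, m_w\}$ (here you correctly use that $m_w \notin N_S(w)$, so $m_w$ is never the endpoint playing the role of $a$, and that $m_{x(m_w)}$ is a single vertex). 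Your static charging buys three things: it avoids verifying feasibility of a relabeled solution, it proves the Part~2 bound for \emph{every} feasible labeling rather than only optimal ones, and it exhibits the extremal structure directly ($D = \lfloor |I_W|/2\rfloor$ is realized precisely by matched pairs of independent-set vertices whose missed clique vertices point at each other, which is exactly the configuration the paper's reduction in Theorem~\ref{theo:stcsplithard} engineers with the $x_i$'s and $y_i$'s). What the paper's exchange buys instead is the structural normalization that each $w_i$ retains exactly one strong neighbour, a statement your argument does not produce — but it is not needed for the lemma as stated.
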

\begin{proof}
Let $w_i$ be a vertex of $I$ and let $B_i$ be the set of vertices in $C$ that are non-adjacent to $w_i$.
Let $A_i$ be the strong neighbors of $w_i$ in an optimal solution.
For the edges of the clique, there are $|A_i||B_i|$ weak edges due to the strong triadic closure.
Moreover any vertex $w_j$ of $I \setminus \{w_i\}$ cannot have a strong neighbor in $A_i$.
This means that $A_i \cap A_j = \emptyset$.
Notice, however, that both sets $B_i \cap B_j$ and $A_i \cap B_j$ are not necessarily empty.

Observe that $I_W$ contains the vertices of $I$ that are incident to at least one strong edge.
Let $E(A,B)$ be the set of weak edges that have one endpoint in $A_i$ and the other endpoint in $B_i$, for every $1 \leq i \leq |I_W|$.
We show that $2|E(A,B)| \geq \sum_{w_i \in I_W}|A_i||B_i|$.
Let $\{a,b\}\in E(A,B)$ such that $a\in A_i$ and $b\in B_i$.
Assume that there is a pair $A_j,B_j$ such that $\{a,b\}$ is an edge between $A_j$ and $B_j$, for $j \neq i$.
Then $a$ cannot belong to $A_j$ since $A_i \cap A_j = \emptyset$.
Thus $a\in A_i$ and $b\in A_j$.
Therefore for every edge $\{a,b\}\in E(A,B)$ there are at most two pairs $(A_i,B_i)$ and $(A_j,B_j)$ for which $a \in A_i \cup B_j$ and $b\in B_i \cup A_j$.
This means that every edge of $E(A,B)$ is counted at most twice in $\sum_{w_i \in I_W}|A_i||B_i|$.

For any two edges $\{u,v\}, \{v,z\}\in E(C) \setminus E(A,B)$, observe that they satisfy the strong triadic closure since there is the edge $\{u,z\}$ in $G$.
Thus the strong edges of the clique are exactly the set of edges $E(C) \setminus E(A,B)$.
In total by counting the number of strong edges between the independent set and the clique, we have $|E_S| = |E(C) \setminus E(A,B)| + \sum_{w_i \in I_W} |A_i|$.
Since $2|E(A,B)| \geq \sum_{w_i \in I_W}|A_i||B_i|$, we get
$$
|E_S| \leq |E(C)| + \sum_{w_i \in I_W}|A_i|\left(1 - \left\lfloor\frac{|B_i|}{2}\right\rfloor\right).
$$

Now the first claim of the lemma holds because $|B_i|=3$ so that $I_W = \emptyset$. 
For the second claim we show that for every vertex of $I_W$, $|A_i|=1$.
Let $w_i \in I_W$ such that $|A_i| \geq 2$ and let $B_i = \{b_i\}$.
Recall that no other vertex of $I_W$ has strong neighbours in $A_i$.
Also note that there is at most one vertex $w_j$ in $I_W$ that has $b_i$ as a strong neighbour.
If such a vertex $w_j$ exist and for the vertex $b_j$ of the clique that misses $w_j$ it holds $b_j \in A_i$, then we let $v = b_j$; otherwise we choose $v$ as an arbitrary vertex of $A_i$.
Observe that no vertex of $I \setminus \{w_i\}$ has a strong neighbour in $A_i \setminus \{v\}$ and only $w_j \in I_W$ is strongly adjacent to $b_i$.
Then we label weak the $|A_i|-1$ edges between $w_i$ and the vertices of $A_i \setminus \{v\}$ and we label strong the $|A_i|-1$ edges between $b_i$ and the vertices of $A_i \setminus \{v\}$.
Making strong the edges between $b_i$ and the vertices of $A_i \setminus \{v\}$ does not violate the strong triadic closure since every vertex of $C \cup \{w_j\}$ is adjacent to every vertex of $A_i \setminus \{v\}$.
Therefore for every vertex $w_i \in I_W$, $|A_i|=1$ and by substituting $|B_i|=1$ in the formula for $|E_S|$ we get the claimed bound.
\end{proof}

In order to give the reduction, we introduce the following problem that we call {\it maximum disjoint non-neighbourhood}:
given a split graph $(C,I)$ where every vertex of $I$ misses three vertices from $C$,
we want to find the maximum subset $S_I$ of $I$ such that the non-neighbourhoods of the vertices of $S_I$
are pairwise disjoint.
In the corresponding decision version, denoted by {\sc MaxDisjointNN}, we are also given an integer $k$ and the problem asks whether $|S_I| \geq k$.
The polynomial-time reduction to {\sc MaxDisjointNN} is given from the classical NP-complete problem {\sc 3-Set Packing} \cite{Karp72}: given a universe $\mathcal{U}$ of $n$ elements, a family $\mathcal{F}$ of triplets of $\mathcal{U}$, and an integer $k$, the problem asks for
a subfamily $\mathcal{F'} \subseteq \mathcal{F}$ with $|\mathcal{F'}| \geq k$ such that all triplets of $\mathcal{F'}$ are pairwise disjoint.

\begin{theorem}\label{theo:mndsplithard}
{\sc MaxDisjointNN} is NP-complete on split graphs.
\end{theorem}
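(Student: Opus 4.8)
The plan is to establish membership in NP and then reduce from the classical NP-complete problem {\sc 3-Set Packing}.

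Membership in NP is immediate: a candidate subset $S_I \subseteq I$ can be verified in polynomial time by checking that $|S_I| \geq k$ and that the sets of missed clique vertices of the elements of $S_I$ are pairwise disjoint.

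For the reduction, given an instance $(\mathcal{U}, \mathcal{F}, k)$ of {\sc 3-Set Packing}, I would build a split graph $(C,I)$ as follows. The clique $C$ contains one vertex $c_u$ for each element $u \in \mathcal{U}$, and the independent set $I$ contains one vertex $w_F$ for each triplet $F \in \mathcal{F}$. I make $w_F$ adjacent to every vertex of $C$ except the three vertices $\{c_u : u \in F\}$ corresponding to the elements of $F$. By construction every vertex of $I$ misses exactly three vertices of $C$, so the instance has the required form, and $(C,I)$ is plainly a split partition.

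The correctness is then transparent: the vertices of $C$ missed by $w_F$ are exactly $\{c_u : u \in F\}$, so two independent-set vertices $w_F$ and $w_{F'}$ miss disjoint subsets of $C$ if and only if $F \cap F' = \emptyset$. Hence $\{w_F : F \in \mathcal{F'}\}$ is a feasible solution to {\sc MaxDisjointNN} of size at least $k$ if and only if $\mathcal{F'}$ is a subfamily of pairwise disjoint triplets of size at least $k$. As the construction is clearly polynomial, this yields the desired reduction.

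I do not expect any genuine difficulty here; the only point deserving attention is that the relevant ``non-neighbourhood'' in the definition of {\sc MaxDisjointNN} must be read as the set of clique vertices missed by a vertex of $I$ (the part inside the independent set is irrelevant, since all vertices of $I$ are pairwise non-adjacent anyway). With this reading, the correspondence between disjoint non-neighbourhoods and disjoint triplets is exact.
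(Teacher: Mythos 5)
Your proposal is correct and matches the paper's proof essentially verbatim: the same reduction from {\sc 3-Set Packing} (clique vertices for elements of $\mathcal{U}$, an independent-set vertex per triplet adjacent to all of $C$ except its three elements) and the same one-to-one correspondence between disjoint triplets and disjoint non-neighbourhoods in $C$. Your clarifying remark that ``non-neighbourhood'' means the missed vertices within $C$ is also the reading the paper intends.
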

\begin{proof}
Given a split graph $G=(C,I)$ and $S_I \subseteq I$, checking whether $S_I$ is a solution for {\sc MaxDisjointNN} amounts to checking whether every pair of vertices of $S_I$ have common neighbourhood. As this can be done in polynomial time the problem is in NP. We will give a polynomial-time reduction to {\sc MaxDisjointNN} from the classical NP-complete problem {\sc 3-Set Packing} \cite{Karp72}: given a universe $\mathcal{U}$ of $n$ elements, a family $\mathcal{F}$ of triplets of $\mathcal{U}$, and an integer $k$, the problem asks for
a subfamily $\mathcal{F'} \subseteq \mathcal{F}$ with $|\mathcal{F'}| \geq k$ such that all triplets of $\mathcal{F'}$ are pairwise disjoint.

Let $(\mathcal{U},\mathcal{F},k)$ be an instance of the {\sc 3-Set Packing}.
We construct a split graph $G=(C,I)$ as follows.
The clique of $G$ is formed by the $n$ elements of $\mathcal{U}$. For every triplet $F_i$ of $\mathcal{F}$ we add a vertex $v_i$ in $I$ that is adjacent to every vertex of $C$ except the three vertices that correspond to the triplet $F_i$. Thus every vertex of $I$ misses exactly three vertices from $C$ and sees the rest of $C$.
Now it is not difficult to see that there is a solution $\mathcal{F'}$ for {\sc 3-Set Packing}$(\mathcal{U},\mathcal{F},k)$ of size at least $k$ if and only if there is a solution $S_I$ for {\sc MaxDisjointNN}$(G,k)$ of size at least $k$.
For every pair $(F_i, F_j)$ of $\mathcal{F'}$ we know that $F_i \cap F_j = \emptyset$ which implies that the vertices $v_i$ and $v_j$ have disjoint non-neighbourhood since $F_i$ corresponds to the non-neighbourhood of $v_i$.
By the one-to-one mapping between the sets of $\mathcal{F}$ and the vertices of $I$, every set $F_i$ belongs to $\mathcal{F'}$ if and only if $v_i$ belongs to $S_I$.
\end{proof}

Now we turn to our original problem {\sc MaxSTC}. The decision version of {\sc MaxSTC} takes as input a graph $G$ and an integer $k$ and asks whether there is strong-weak labeling of the edges of $G$ that satisfies the strong triadic closure with at least $k$ strong edges.

\begin{theorem}\label{theo:stcsplithard}
The decision version of {\sc MaxSTC} is NP-complete on split graphs.
\end{theorem}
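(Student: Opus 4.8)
The plan is to prove NP-hardness by a polynomial-time reduction from {\sc MaxDisjointNN}, which Theorem~\ref{theo:mndsplithard} establishes to be NP-complete on split graphs. Membership in NP is immediate: a strong-weak labeling is a polynomial-size certificate whose validity can be checked in polynomial time by verifying, via Observation~\ref{obs:stcP3Clique}, that $N_S[v]$ induces a clique for every vertex $v$. The heart of the argument is to engineer, from a {\sc MaxDisjointNN} instance, a split graph in which making the edges incident to a triplet-vertex strong becomes a \emph{net-beneficial} move exactly when its non-neighbourhood is disjoint from those used by the other selected vertices, so that the maximum number of strong edges encodes the size of a maximum family of pairwise disjoint non-neighbourhoods.

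The construction would start from the {\sc MaxDisjointNN} split graph $(C,I)$, where each $v_i\in I$ misses a triplet $F_i=\{x_i,y_i,z_i\}\subseteq C$, and augment it so that both regimes of Lemma~\ref{lem:simplesplitcases} come into play. For each $v_i$ I would add one private clique vertex $a_i$ (joined to all of the clique) together with three \emph{partner} vertices placed in the independent set, each adjacent to every clique vertex except $a_i$; thus every partner misses exactly one clique vertex, while $v_i$ keeps missing exactly the three vertices of $F_i$. The intended (\emph{canonical}) labeling that ``selects'' $v_i$ makes strong the edge $\{v_i,a_i\}$ together with the three edges joining the partners to $x_i,y_i,z_i$. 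The three clique edges $\{a_i,x_i\}$, $\{a_i,y_i\}$, $\{a_i,z_i\}$ that these strong edges force to be weak are each shared between $v_i$ and one partner, so the gadget trades four new strong edges for three weak clique edges, a net gain of one. Crucially, because two independent-set vertices can never share a strong clique neighbour (the sets $A_i$ are pairwise disjoint, as already noted inside the proof of Lemma~\ref{lem:simplesplitcases}), two gadgets competing for a common element $x_i=x_j$ cannot both be activated, which mirrors exactly the disjointness requirement of {\sc MaxDisjointNN}.

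For the forward direction I would take a family of $k$ pairwise disjoint non-neighbourhoods and activate the corresponding gadgets canonically. Since the selected triplets are disjoint and the vertices $a_i$ are private, the forced weak edges are all distinct, and a direct count yields a labeling with exactly $|E(C')|+k$ strong edges, where $C'$ is the augmented clique; I would set the {\sc MaxSTC} threshold to this value.

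The backward direction is where I expect the real difficulty. Starting from an arbitrary optimal labeling, I would use the identity $|E_S|=|E(C')|-|E(A,B)|+\sum_{w_i\in I_W}|A_i|$ together with the sharing inequality $2|E(A,B)|\ge\sum_{w_i\in I_W}|A_i||B_i|$ from Lemma~\ref{lem:simplesplitcases} to bound the surplus $\sum_{w_i\in I_W}|A_i|-|E(A,B)|$ over $|E(C')|$. The delicate point is that this inequality alone is too weak, since it ignores \emph{which} clique vertices are shared: I must show that a partner can convert its potential surplus into an actual gain only by sharing a forced weak edge with its own triplet-vertex, that a triplet-vertex contributes a full net gain of one only when all three of its partners are simultaneously active, and, most importantly, that no labeling can profit from \emph{non-canonical} behaviour such as partial activation of a gadget or the sharing of weak edges across distinct gadgets through the private vertices $a_i$. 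Proving that every beneficial labeling can be rearranged, without decreasing the number of strong edges, into a disjoint union of canonically activated gadgets --- so that the surplus is at most the size of a maximum family of pairwise disjoint non-neighbourhoods --- is the crux of the proof, and it is precisely the step where the two-case analysis of Lemma~\ref{lem:simplesplitcases} must be strengthened into a global counting argument.
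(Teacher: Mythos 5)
Your NP-membership argument is fine and your choice of source problem ({\sc MaxDisjointNN}, via Theorem~\ref{theo:mndsplithard}) matches the paper, as does the general gadget idea: independent-set partners that miss exactly one clique vertex, arranged so that canonically activating a triplet vertex trades four strong edges for three weak clique edges, a net gain of one. However, the reduction as you state it is broken, not merely unfinished: the threshold $|E(C')|+k$ is wrong for your construction. Two partners from \emph{different} gadgets, say $p$ missing $a_i$ and $q$ missing $a_j$, can be given strong edges $\{p,a_j\}$ and $\{q,a_i\}$ at the cost of the single \emph{shared} weak clique edge $\{a_i,a_j\}$ --- exactly the second regime of Lemma~\ref{lem:simplesplitcases} --- for a net gain of $+1$ that uses no triplet structure at all (one checks directly that $N_S[a_i]$, $N_S[a_j]$, $N_S[p]$, $N_S[q]$ all induce cliques, so the labeling is valid). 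Pairing up the $t=|I|$ gadgets this way yields $|E(C')|+\lfloor t/2\rfloor$ strong edges on \emph{every} instance, so whenever $t\geq 2k$ --- the typical situation for a {\sc 3-Set Packing} instance --- a NO-instance of {\sc MaxDisjointNN} already meets your threshold. Hence the equivalence you plan to prove in the backward direction is false as stated, and no rearrangement argument can rescue it without changing the target value. (You half-anticipate this when you worry about ``sharing of weak edges across distinct gadgets through the private vertices $a_i$,'' but you treat it as a difficulty to be argued away rather than what it is: a strictly cheaper way to collect strong edges that your threshold fails to account for.)

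This leak is precisely what the paper's construction is engineered to absorb. Instead of per-gadget partners, it adds one fixed global family of $n=|C|$ pairs $(x_i,y_i)$, each $x_i$ missing the \emph{distinct} vertex $y_i$, and sets the threshold to $n(2n-1)+\lfloor n/2\rfloor+\lceil k/2\rceil$: the $\lfloor n/2\rfloor$ term builds the unavoidable ``free'' pairing gain into the baseline (realized by a matching inside $C_Y$ on the unused pairs), while activating $w_i$ --- one strong edge $\{w_i,y_i\}$ plus $x_i$ strongly adjacent to all of $B_i$, the three weak edges between $y_i$ and $B_i$ being shared by both --- outperforms a pairing by just enough that $k$ activations are detectable. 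The backward direction, which you correctly identify as the crux but leave entirely open, is then handled by two normalization claims (Claims~\ref{claim:insideCYandx} and~\ref{claim:yCy}): any optimal solution can be modified, without losing strong edges, so that each selected $w_i$ has exactly one strong neighbour $y\in C_Y$ and there is a unique $x\in I_X$ with $N_S(x)=B_i$; disjointness of the $B_i$ then follows because a shared vertex of $B_i\cap B_j$ would force a single $x$ to serve both $w_i$ and $w_j$, whence its unique non-neighbour $y$ would be strongly adjacent to the non-adjacent pair $w_i,w_j$, violating the strong triadic closure. To repair your proof you would need both a corrected threshold incorporating the baseline pairing term and exchange-argument machinery of this normalization type; the global counting strengthening of Lemma~\ref{lem:simplesplitcases} that you gesture at is not by itself sufficient, since the obstruction is not the counting but the value your construction gives away to triplet-independent labelings.
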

\begin{proof}
Given a strong-weak labeling $(E_S, E_W)$ of a split graph $G=(C,I)$, checking whether $(E_S, E_W)$ satisfies the strong triadic closure amounts to check in $G - E_W$ whether there is a non-edge between its endvertices of every $P_3$ according to Observation~\ref{obs:stcP3Clique}.
Thus by listing all $P_3$'s of $G \setminus E_W$ the problem belongs to NP.
Next we give a polynomial-time reduction to {\sc MaxSTC} from the {\sc MaxDisjointNN} problem on split graphs which is NP-complete by Theorem~\ref{theo:mndsplithard}.
Let $(G,k)$ be an instance of {\sc MaxDisjointNN} where $G = (C, I)$ is a split graph such that every vertex of the independent set $I$ misses exactly three vertices from the clique $C$.
For a vertex $w_i \in I$, we denote by $B_i$ the set of the three vertices in $C$ that are non-adjacent to $w_i$. Let $|C| = n$.
We extend $G$ and construct another split graph $G'$ as follows (see Figure~\ref{fig:splitreduction}):
\begin{itemize}
\item We add $n$ vertices $y_1, \ldots, y_n$ in the clique that consist the set $C_Y$.
\item We add $n$ vertices $x_1, \ldots, x_n$ in the independent set that consist the set $I_X$.
\item For every $1 \leq i \leq n$, $y_i$ is adjacent to all vertices of $(C \cup C_Y \cup I \cup I_X) \setminus \{x_i\}$.
\item For every $1 \leq i \leq n$, $x_i$ is adjacent to all vertices of $(C \cup C_Y)\setminus \{y_i\}$.
\end{itemize}
Thus $w_i$ misses only the vertices of $B_i$ from the clique.
By construction it is clear that $G'$ is a split graph with a split partition $(C \cup C_Y, I \cup I_X)$.
Notice that the clique $C \cup C_Y$ has $2n$ vertices and $G=G'[I \cup C]$.

\begin{figure}[t]
\centering
\includegraphics[scale=1.05]{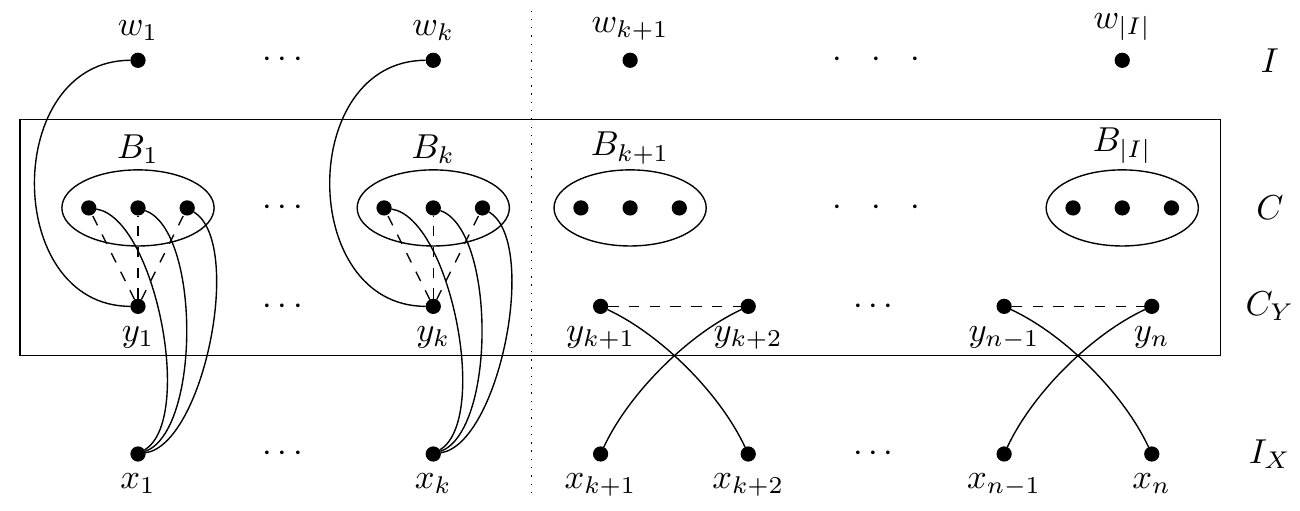}
\caption{The split graph $(C \cup C_Y, I \cup I_X)$ given in the polynomial-time reduction.
Every vertex $w_i$ misses the vertices of $B_i$ and sees the vertices of $(C \cup C_Y) - B_i$.
Every vertex $x_i$ misses $y_i$ and sees the vertices of $(C \cup C_Y) - \{y_i\}$.
The sets $B_1, \ldots, B_k$ are pairwise disjoint whereas for every set $B_j$, $k < j \leq |I|$, there is a set $B_i$, $1 \leq i \leq k$, such that $B_i \cap B_j \neq \emptyset$.
The drawn edges correspond to the strong edges between the independent set and the clique, and the dashed edges are the only weak edges in the clique $C \cup C_Y$.}
\label{fig:splitreduction}
\end{figure}

We claim that $G$ has a solution for {\sc MaxDisjointNN} of size at least $k$ if and only if $G'$ has a strong triadic closure with at least $n(2n-1)+\lfloor\frac{n}{2}\rfloor+\lceil\frac{k}{2}\rceil$ strong edges.

Assume that $\{w_1, \ldots, w_k\} \subseteq I$ is a solution for {\sc MaxDisjointNN} on $G$ of size at least $k$.
Since the sets $B_1, \ldots, B_k$ are pairwise disjoint, there are $k$ distinct vertices $y_1, \ldots, y_k$ in $C_Y$ such that $k \leq n$.
We will give a strong-weak labeling for the edges of $G'$ that fulfills the strong triadic closure and has at least the claimed number of strong edges.
For simplicity, we describe only the strong edges; the edges of $G'$ that are not given are all labeled weak.
We label the edges incident to each vertex $w_i, y_i, x_i$ and the three vertices of each set $B_i$, for $1 \leq i \leq k$ as follows:
\begin{itemize}
\item The edges of the form $\{y_i,v\}$ are labeled strong if $v \in (C \cup C_Y)\setminus B_i$ or $v = w_i$. 
\item The edges incident to $x_i$ and the three vertices of $B_i$ are labeled strong.
\end{itemize}
Next we label the edges incident to the rest of the vertices.
Let $I_W$ be the vertices of $I \setminus \{w_1, \ldots, w_k\}$ and let $C_W$ be the vertices of $C \setminus (B_1 \cup \cdots \cup B_k)$.
No edge incident to a vertex of $I_W$ is labeled strong.
For every vertex $u \in C_W$ we label the edge $\{u,v\}$ strong if $v \in (C \cup C_Y)$.
Let $C'_Y = \{y_{k+1}, \ldots, y_n\}$ and let $I'_X = \{x_{k+1}, \ldots, x_n\}$.
Recall that every vertex $x_{k+j}$ is adjacent to every vertex of $C'_Y \setminus \{y_{k+j}\}$.
Let $\ell = \lfloor\frac{n-k}{2}\rfloor$.
Let $M = \{e_1, \ldots, e_{\ell}\}$ be a maximal set of pairwise non-adjacent edges in $G'[C'_Y]$ where $e_{j} = \{y_{k+2j-1}, y_{k+2j}\}$, for $j \in \{1, \ldots, \ell\}$;
note that $M$ is a maximal matching of $G'[C'_Y]$.
For every vertex $y \in C'_Y$, we label the edge $\{y,v\}$ strong if $v \in (C \cup C_Y)\setminus \{y'\}$ such that $\{y,y'\} \in M$.
Moreover, for $j \in \{1, \ldots, \ell\}$, the edges $\{x_{k+2j-1}, y_{k+2j}\}$ and $\{x_{k+2j}, y_{k+2j-1}\}$ are labeled strong.
Note that if $n-k$ is odd then no edge incident to the unique vertex $y_n$ belongs to $M$ and all edges between $y_n$ and the vertices of $C \cup C_Y$ are labeled strong;
in such a case also note that no edge incident to $x_n$ is strong.

Let us show that such a labeling fulfills the strong triadic closure.
Any labeling for the edges inside $G'[C \cup C_Y]$ is satisfied since $G'[C \cup C_Y]$ is a clique.
Also note that there are no two adjacent strong edges that have a common endpoint in the clique $C \cup C_Y$ and the two other endpoints in the independent set $I \cup I_X$.
If there are two strong edges incident to the same vertex $v$ of the independent set then $v\in \{x_1, \ldots, x_k\}$ and $N_S[v] = B_i$ is a clique.
Assume that there are two adjacent strong edges $\{u,v\}$ and $\{v,z\}$ such that $u \in I \cup I_X$, and $v,z \in C \cup C_Y$.
\begin{itemize}
\item If $u \in \{w_1, \ldots, w_k\}$ then $\{u,z\}\in E(G')$ since every $w_i$ misses only the vertices of $B_i$.
\item If $u \in \{x_1, \ldots, x_k\}$ then $v \in B_i$ and $\{u,z\}\in E(G')$ since every vertex $x_i$ misses only $y_i$.
\item If $u \in I_X \setminus \{x_1, \ldots, x_k\}$ then the strong neighbours of $v$ in $C \cup C_Y$ are adjacent to $u$ in $G'$ since for the only non-neighbour of $u$ in $C \cup C_Y$ there is a weak edge incident to $v$.
\end{itemize}
Recall that there is no strong edge incident to the vertices of $I \setminus \{w_1, \ldots, w_k\}$.
Therefore the given strong-weak labeling fulfills the strong triadic closure.

Observe that the number of vertices in $C \cup C_Y$ is $2n$.
There are exactly $3k+\ell$ weak edges in $G'[C \cup C_Y]$.
Thus the number of strong edges in $G'[C \cup C_Y]$ are $n(2n-1)-3k - \ell$.
There are $k$ strong edges incident to $\{w_1, \ldots, w_k\}$,
$3k$ strong edges incident to $\{x_1, \ldots, x_k\}$, and $2\ell$ strong edges incident to the vertices of $I_X \setminus \{x_1, \ldots, x_k\}$.
Thus the total number of strong edges is $n(2n-1)-3k - \ell + k + 3k + 2\ell = n(2n-1) + \ell + k$ and by substituting $\ell = \lfloor\frac{n-k}{2}\rfloor$ we get the claimed bound.

For the opposite direction, assume that $G'$ has a strong triadic closure with at least $n(2n-1)+\lfloor\frac{n}{2}\rfloor+\lceil\frac{k}{2}\rceil$ strong edges.
Let $E_S$ be the set of strong edges in such a strong-weak labeling.
Observe that the number of edges in $G'[C \cup C_Y]$ is $n(2n-1)$ which implies that $E_S$ contains edges between the independent set $I \cup I_X$ and the clique $C \cup C_Y$.
If no vertex of $I_X$ is incident to an edge of $E_S$ then the first statement of Lemma~\ref{lem:simplesplitcases} implies that $|E_S| =  |E(C\cup C_Y)| = n(2n-1)$.
And if no vertex of $I$ is incident to an edge of $E_S$ then the second statement of Lemma~\ref{lem:simplesplitcases} shows that $|E_S| \leq |E(C\cup C_Y)| + \lfloor\frac{n}{2}\rfloor$.
Therefore $E_S$ contains edges that are incident to a vertex of $I$ and edges that are incident to a vertex of $I_X$.

In the graph spanned by $E_S$ we denote by $S_W$ the set of vertices of $I$ that have strong neighbours in $C \cup C_Y$.
We will show that the non-neighbourhoods of the vertices of $S_W$ in $C \cup C_Y$ are disjoint in $G'$ and, since $G$ is an induced subgraph of $G'$, their non-neighbourhoods are also disjoint in $G$.

\begin{claim}\label{claim:insideCYandx}
For every $w_i \in S_W$, $N_S(w_i) \subseteq C_Y$ and there exists a unique vertex $x \in I_X$ such that $N_S(x) = B_i$.
\end{claim}
\begin{claimproof}
Let $w_i$ be a vertex of $S_W$. We first show that $N_S(w_i) \subseteq C_Y$.
Let $W_i$ be the strong neighbours of $w_i$ in $C$ and let $Y_i$ be the strong neighbours of $w_i$ in $C_Y$.
Observe that no other vertex of $S_W$ has a strong neighbour in $W_i \cup Y_i$.
Further notice that there are $(|W_i|+|Y_i|) |B_i|$ weak edges since $w_i$ is non-adjacent to the vertices of $B_i$.
We show that for every vertex $w_i \in S_W$ it holds $W_i = \emptyset$.
For all vertices $w_i$ for which $W_i \neq \emptyset$ we replace in $E_S$ the strong edges between $w_i$ and the vertices of $W_i$ by the edges between the vertices of $B_i$ and $W_i$.
Notice that making strong the edges between the vertices of $B_i$ and $W_i$ does not violate the strong triadic closure since no vertex from $S_W$ has a strong neighbour in $B_i$ and every vertex of $I_X$ is adjacent to all the vertices of $W_i$.
Let $E(W,B)$ be the set of edges that have one endpoint in $W_i$ and the other endpoint in $B_i$, for every $w_i \in S_W$.
Notice that the difference between the two described solutions is $|E(W,B)| - \sum |W_i|$.
By Lemma~\ref{lem:simplesplitcases} and $|B_i|=3$, we know that $|E(W,B)| \geq 3/2 \sum |W_i|$.
Thus such a replacement is safe for the number of edges of $E_S$ and every vertex $w_i \in S_W$ has strong neighbours only in $C_Y$.

Let $X_i$ be the set of vertices of $I_X$ that have at least one non-neighbour in $Y_i$.
By construction every vertex of $Y_i$ is non-adjacent to exactly one vertex of $I_X$, and thus $|X_i| = |Y_i|$.
Since $w_i$ has strong neighbours in $Y_i$, every edge between $X_i$ and $Y_i$ is weak.
By the previous argument every vertex of $S_W$ has strong neighbours only in $C_Y$ so that $N_S(B_i) \cap I = \emptyset$. 
Also notice that no two vertices of the independent set have a common strong neighbour in the clique, which means that there are at most $|B_i|$ strong neighbours between the vertices of $B_i$ and $I_X$. 
Choose an arbitrary vertex $x \in X_i$.
We replace all strong edges in $E_S$ between $B_i$ and $I_X$ by $|B_i|$ strong edges between $x$ and the vertices of $B_i$.
Notice that such a replacement is safe since the unique non-neighbour of $x$ belongs to $Y_i$ and there are weak edges already in the solution between $B_i$ and $Y_i$ because of the strong edges between $w_i$ and $Y_i$.
Thus $B_i \subseteq N_S(x)$.
We focus on the edges between the vertices of $(C \cup C_Y) \setminus (B_i \cup Y_i)$ and $x$.
If a vertex $x$ of $X_i$ has a strong neighbour $u$ in $(C \cup C_Y) \setminus B_i $ then the edge $\{u,y\}$ is weak where $y \in Y_i$ is the unique non-neighbour of $x$.
Also notice that $N_S(u) \cap (I \cup I_X) = \{x\}$, $N_S(y) \cap (I \cup I_X) = \{w_i\}$, and $w_i$ is adjacent to $u$.
Then we can safely replace the strong edge $\{x,u\}$ by the edge $\{u,y\}$ and keep the same size of $E_S$.
Hence $N_S(x) = B_i$.
\end{claimproof}

\begin{claim}\label{claim:yCy}
For every $w_i \in S_W$, $N_S(w_i) = \{y\}$ where $y \in C_Y$ is the non-neighbour of $x$ with $N_S(x) = B_i$. 
\end{claim}
\begin{claimproof}
Let $Y_i = N_S(w_i)$.
By Claim~\ref{claim:insideCYandx} we know that $Y_i \subseteq C_Y$ and there exists $x \in I_X$ such that $N_S(x) = B_i$.
Both $w_i$ and $x$ are vertices of the independent set and, thus, no other vertex of $I \cup I_X$ has strong neighbours in $B_i \cup Y_i$.
This means that if we remove $w_i$ from $S_W$ by making weak the edges incident to $w_i$ and the vertices of $Y_i$ then the edges between the vertices of $B_i$ and $Y_i \setminus \{y\}$ are safely turned into strong.
Let $E'_S$ be the set of strong edges in an optimal solution such that all edges incident to $w_i$ are weak.
Then $|E_S| - |E'_S| = |Y_i| + |B_i| - |Y_i||B_i|$ and $|E_S| > |E'_S|$ only if $|Y_i| = 1$ because $|B_i| > 1$.
Thus $N_S(w_i)$ contains exactly one vertex $y \in C_Y$.
\end{claimproof}

\medskip

We claim that for every pair of vertices $w_i, w_j \in S_W$, $B_i \cap B_j = \emptyset$.
Assume for contradiction that $B_i \cap B_j \neq \emptyset$.
Applying Claim~\ref{claim:insideCYandx} for $w_i$ shows that there exists $x \in I_X$ that has strong neighbours in every vertex of $B_i \cap B_j$.
With a similar argument for $w_j$ we deduce that there exists $x' \in I_X$ that has strong neighbours in every vertex of $B_i \cap B_j$.
If $x \neq x'$ then a vertex from $B_i \cap B_j$ has two distinct strong neighbours in $I_X$ which is not possible due to the strong triadic closure.
Thus $x = x'$.
Claim~\ref{claim:yCy} implies that the unique non-neighbour $y$ of $x$ is strongly adjacent to both $w_i$ and $w_j$.
This however violates the strong triadic closure for the edges of $E_S$ since $w_i,w_j$ are non-adjacent and we reach a contradiction.
Thus $B_i \cap B_j = \emptyset$.
This means that the number of edges in $E_S$ is at least $n(2n-1)+\lfloor\frac{n}{2}\rfloor+\lceil\frac{|S_W|}{2}\rceil$ which is maximized for $k=|S_W|$.
Therefore $E_S$ contains the maximum number of $|S_W|$ which is a solution for {\sc MaxDisjointNN} on $G$, since $G$ is an induced subgraph of $G'$.
\end{proof}


\section{Computing MaxSTC on proper interval graphs}
A graph is a \emph{proper interval graph} if there is a bijection between its vertices and a
family of closed intervals of the real line such that two vertices are adjacent if and
only if the two corresponding intervals overlap and no interval is properly contained in another interval.
A vertex ordering $\sigma$ is a linear arrangement $\sigma = \langle v_1, \ldots, v_n \rangle$ of the vertices of $G$.
For a vertex pair $x,y$ we write $x \preceq y$ if $x = v_i$ and $y = v_j$ for some indices $i \leq j$; if $x \neq y$ which implies $i <j$ then we write $x \prec y$.
The first position in $\sigma$ will be referred to as the {\it left end} of $\sigma$, and the last position as the {\it right end}.
We will use the expressions {\it to the left of}, {\it to the right of}, {\it leftmost}, and {\it rightmost} accordingly.

A vertex ordering $\sigma$ for $G$ is called a {\it proper interval ordering} if for every vertex triple $x,y,z$ of $G$ with $x \prec y \prec z$, $\{x,z\} \in E(G)$ implies $\{x,y\}, \{y,z\} \in E(G)$.
Proper interval graphs are characterized as the graphs that admit such orderings, that is, a graph is a proper interval graph if and only if it has a proper interval ordering \cite{LO934}.
We only consider this vertex ordering characterization for proper interval graphs.
Moreover it can be decided in linear time whether a given graph is a proper interval graph, and if so, a proper interval ordering can be generated in linear time \cite{LO934}.
It is clear that a vertex ordering $\sigma$ for $G$ is a proper interval ordering if and only if the reverse of $\sigma$ is a proper interval ordering.
A connected proper interval graph without twin vertices has a unique proper interval ordering $\sigma$ up to reversal \cite{DHH96,Iba09}.
Figure~\ref{fig:proper} shows a proper interval graph with its proper interval ordering.

\begin{figure}[t]
\centering
\includegraphics[scale=1.00]{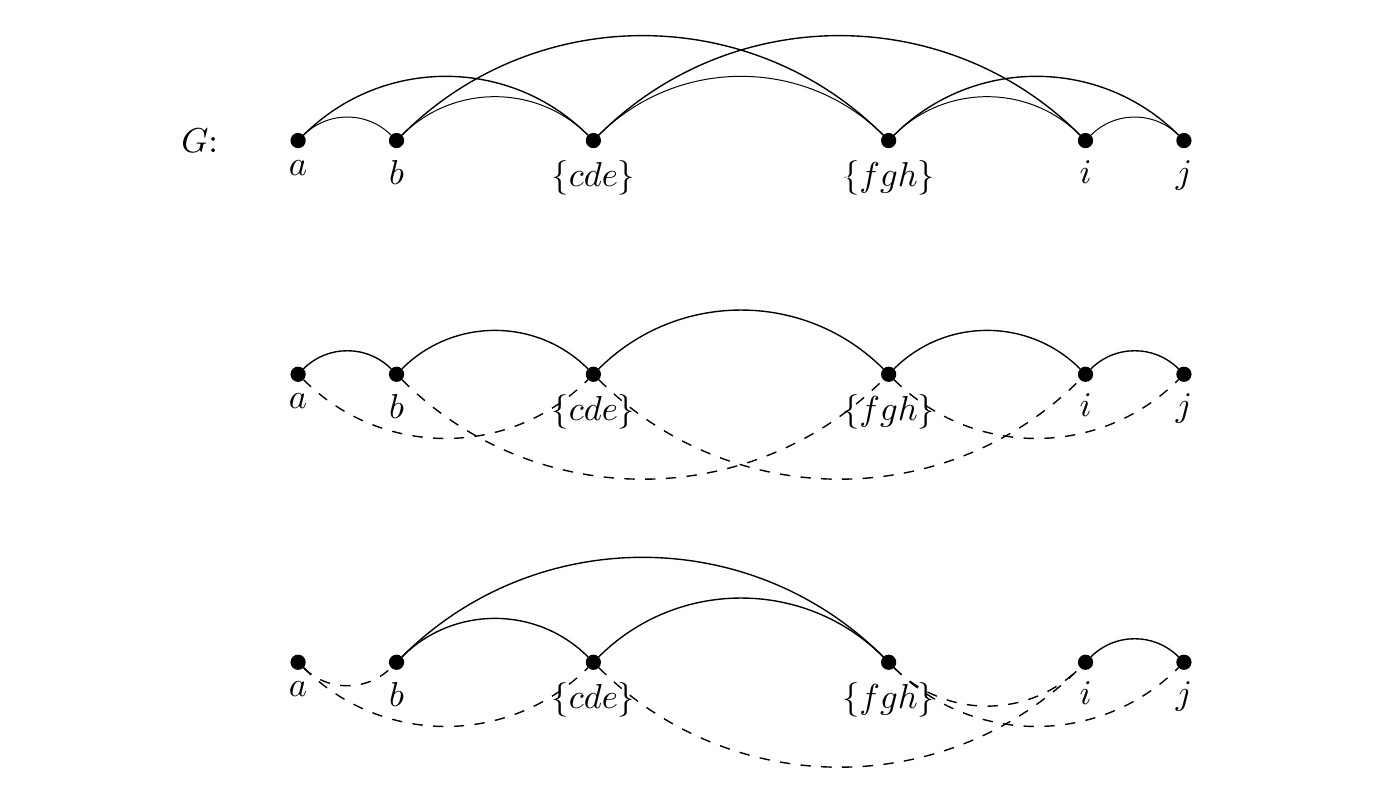}
\caption{A proper interval graph $G$ and its proper interval ordering. The vertices $\{c,d,e\}$ and $\{f,g,h\}$ form twin sets in $G$.
The two lower orderings show two solutions for {\sc MaxSTC} on $G$.
A solid edge corresponds to a strong edge, whereas a dashed edge corresponds to a weak edge.
Observe that the upper solution contains larger number of strong edges than the lower one.
Also note that the lower solution consists an optimal solution for the {\sc Cluster Deletion} problem on $G$.}\label{fig:proper}
\end{figure}

Let us turn our attention to the {\sc MaxSTC} problem.
Instead of maximizing the strong edges of the original graph $G$, we will look at the maximum independent set of the following graph that we call the \emph{line-incompatibility} graph $\widehat{G}$ of $G$:
for every edge of $G$ there is a node in $\widehat{G}$ and two nodes of $\widehat{G}$ are adjacent if and only if the vertices of the corresponding edges induce a $P_3$ in $G$.
Note that the line-incompatibility graph of $G$ is a subgraph of the line graph\footnote{The \emph{line graph} of $G$ is the graph having the edges of $G$ as vertices and two vertices of the line graph are adjacent if and only if the two original edges are incident in $G$.} of $G$.

\begin{proposition}\label{prop:MaxSTCIndependent}
A subset $S$ of edges $E(G)$ is an optimal solution for {\sc MaxSTC} of $G$ if and only if $S$ is a maximum independent set of $\widehat{G}$.
\end{proposition}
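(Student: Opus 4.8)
The plan is to show that the feasible {\sc MaxSTC} solutions are precisely the independent sets of $\widehat{G}$, after which the optimality statement follows by comparing cardinalities. The natural correspondence to exploit is the identification of a strong-weak labeling with the set $E_S = S$ of its strong edges, the remaining edges being weak. Under this identification $|E_S| = |S|$ equals the number of nodes of $\widehat{G}$ that are selected, so maximizing the number of strong edges is literally maximizing the size of the chosen node set. Hence everything reduces to characterizing which sets $S$ are \emph{feasible}, that is, which labelings satisfy the strong triadic closure.

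First I would invoke Observation~\ref{obs:stcP3Clique}: labeling exactly the edges of $S$ strong satisfies the strong triadic closure if and only if every $P_3$ of the spanning subgraph $(V(G),S)$ induces a $K_3$ in $G$. The key combinatorial point is that a $P_3$ consists of three vertices and two edges, so any two edges forming a $P_3$ must share a common endpoint; conversely two edges sharing an endpoint $v$, say $\{u,v\}$ and $\{v,w\}$, induce either a $P_3$ (when $\{u,w\}\notin E(G)$) or a $K_3$ (when $\{u,w\}\in E(G)$). Therefore the condition ``every $P_3$ of $(V(G),S)$ is a $K_3$ in $G$'' is equivalent to the condition ``no two edges of $S$ induce a $P_3$ in $G$''.

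Next I would translate this into the language of $\widehat{G}$. By definition two nodes of $\widehat{G}$ are adjacent exactly when the corresponding edges of $G$ induce a $P_3$; consequently $S$ contains no two edges inducing a $P_3$ if and only if $S$, viewed as a set of nodes, is an independent set of $\widehat{G}$. Combining this with the previous step gives the feasibility characterization: $S$ is a valid {\sc MaxSTC} labeling if and only if $S$ is an independent set of $\widehat{G}$. Finally, since feasible solutions are exactly the independent sets and their objective value is exactly their cardinality, a set $S$ of maximum strong-edge count is the same object as an independent set of maximum size, which yields both directions of the claimed equivalence.

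I do not expect a genuine obstacle here; the proof is essentially a reformulation of Observation~\ref{obs:stcP3Clique}. The only subtlety worth stating carefully is the observation that any two edges participating in a common induced $P_3$ necessarily share a vertex, since it is precisely this fact that makes the adjacency relation of $\widehat{G}$ capture all and only the forbidden configurations for the strong triadic closure; in particular two non-adjacent edges span four vertices and can never form a $P_3$, so they are correctly left non-adjacent in $\widehat{G}$.
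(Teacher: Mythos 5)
Your proof is correct and follows essentially the same route as the paper: both reduce the claim to Observation~\ref{obs:stcP3Clique} together with the definition of adjacency in $\widehat{G}$, showing that feasible strong-edge sets are exactly the independent sets of $\widehat{G}$ and then comparing cardinalities. Your version merely spells out more explicitly the (correct) point that any two edges inducing a $P_3$ must share an endpoint, which the paper leaves implicit.
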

\begin{proof}
By Observation~\ref{obs:stcP3Clique} for every $P_3$ in $G$ at least one of its two edges must be labeled weak in $S$.
This means that these two edges are adjacent in $\widehat{G}$ and they cannot belong to an independent set of $\widehat{G}$.
On the other hand, by construction two nodes of $\widehat{G}$ are adjacent if and only if there is a $P_3$ in $G$.
Thus the nodes of an independent set of $\widehat{G}$ can be labeled strong in $G$ satisfying the strong triadic closure.
\end{proof}

Therefore we seek for the optimal solution of $G$ by looking at a solution for a maximum independent set of $\widehat{G}$.
As a byproduct, if we are interested in minimizing the number of weak edges then we ask for the minimum vertex cover of $\widehat{G}$.
We denote by $I_{\widehat{G}}$ the maximum independent set of $\widehat{G}$.
To distinguish the vertices of $\widehat{G}$ with those of $G$ we refer to the former as nodes and to the latter as vertices.
For an edge $\{u,v\}$ of $G$ we denote by $uv$ the corresponding node of $\widehat{G}$.
Figure~\ref{fig:incompatibility} shows the line-incompatibility graph of the proper interval graph given in Figure~\ref{fig:proper}.

\begin{figure}[t]
\centering
\includegraphics[scale=1.00]{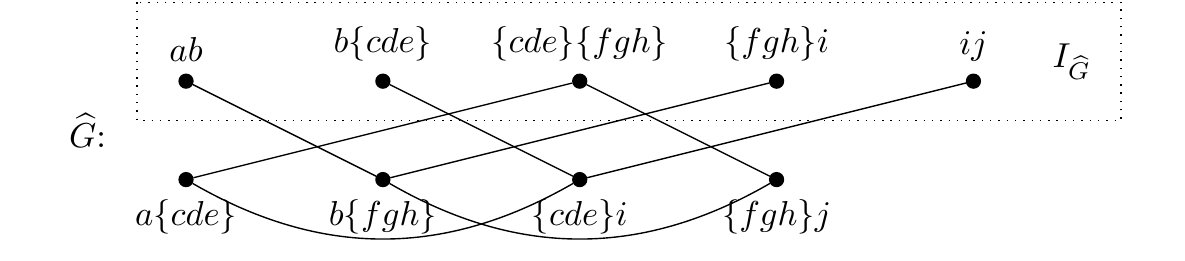}
\caption{The line-incompatibility graph $\widehat{G}$ of the proper interval graph $G$ given in Figure~\ref{fig:proper}.
The set $I_{\widehat{G}}$ is a maximum weighted independent set of $\widehat{G}$, by taking into account the weight of each node (i.e., an edge of $G$) that corresponds to the number of the twin vertices of its endpoints in $G$ (see Lemma~\ref{lem:twins}).}\label{fig:incompatibility}
\end{figure}

\begin{lemma}\label{lem:twins}
Let $x$ and $y$ be twin vertices of a graph $G$. Then there is an optimal solution $I_{\widehat{G}}$ such that $xy \in I_{\widehat{G}}$ and
for every vertex $u \in N(x)$, $xu \in I_{\widehat{G}}$ if and only if $yu \in I_{\widehat{G}}$.
\end{lemma}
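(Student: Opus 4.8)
The plan is to exploit the fact that twin vertices induce an automorphism of $G$, hence of $\widehat{G}$, and to combine this with a few elementary structural observations about how the edges incident to $x$ and $y$ sit inside $\widehat{G}$. Throughout I would write $W = N(x)\setminus\{y\} = N(y)\setminus\{x\}$ for the common neighbourhood (the two sets coincide because $N[x]=N[y]$), call the nodes $xu$ with $u\in W$ the \emph{$x$-edges} and the nodes $yu$ with $u\in W$ the \emph{$y$-edges}, and call a node $ef$ with $e,f\notin\{x,y\}$ an \emph{external} node. Note that the statement only constrains $u$ for which both $xu$ and $yu$ are edges, i.e.\ $u\in W$, while the case $u=y$ is subsumed by the requirement $xy\in I_{\widehat{G}}$.

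First I would record three consequences of $N[x]=N[y]$. (i) The node $xy$ is isolated in $\widehat{G}$: if $\{x,y\}$ were an edge of an induced $P_3$, its centre would be $x$ or $y$ and the far endpoint $c$ would be a neighbour of that centre distinct from the other twin, hence a common neighbour of both twins, so the three vertices would induce a $K_3$ rather than a $P_3$. Thus every maximum---hence maximal---independent set already contains $xy$, which settles the first requirement. (ii) No $x$-edge is adjacent to any $y$-edge: $xu$ and $yu$ share $u$ but their far endpoints $x,y$ are adjacent, while $xu$ and $yu'$ with $u\neq u'$ share no vertex. (iii) For each $u\in W$ the nodes $xu$ and $yu$ have exactly the same neighbours among the external nodes: an external node $ef$ can meet $\{x,u\}$ or $\{y,u\}$ only in $u$, and the resulting triple is a $P_3$ precisely when the remaining endpoint of $ef$ misses $x$, equivalently---that endpoint being different from $x,y$ and the twins having equal neighbourhoods---misses $y$.

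With these in hand I would run a mirroring argument. Starting from any maximum independent set $I$ (so $xy\in I$ by (i)), put $A=\{u\in W: xu\in I\}$ and $B=\{u\in W: yu\in I\}$, and assume without loss of generality (the twin relation being symmetric in $x,y$) that $|A|\ge|B|$. Define
$$ I' = \bigl(I\setminus\{yu : u\in B\}\bigr)\cup\{yu : u\in A\}, $$
replacing the $y$-edges of $I$ by mirror images of its $x$-edges. I would then verify that $I'$ is independent: two $y$-edges $yu,yu'$ conflict only when $u,u'$ are non-adjacent, but $\{xu:u\in A\}\subseteq I$ forces the vertices of $A$ to be pairwise adjacent; by (ii) the new $y$-edges never clash with the retained $x$-edges; and by (iii) a node $yu$ with $u\in A$ conflicts with a retained external node exactly when $xu$ would, which it does not. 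Since $|I'|=|I|-|B|+|A|\ge|I|$ and $I$ is maximum, $I'$ is maximum as well, and by construction $xu\in I'$ iff $u\in A$ iff $yu\in I'$, so $I'$ meets both requirements.

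The only genuinely delicate point is observation (iii): one must check that replacing $x$ by $y$ leaves the set of conflicting external nodes unchanged, which is exactly where $N[x]=N[y]$ enters and why the argument would break for vertices sharing only an open neighbourhood. Everything else is bookkeeping---the characterization that two $x$-edges conflict iff the corresponding neighbours are non-adjacent is needed only to keep the mirrored $y$-edges mutually independent, and the isolation of $xy$ in (i) disposes of the first requirement outright.
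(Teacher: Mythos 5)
Your proof is correct and takes essentially the same route as the paper: you first show that $xy$ is isolated in $\widehat{G}$, and then mirror the $y$-edges of the solution onto copies of the $x$-edges (or symmetrically, whichever side is larger), which produces exactly the paper's sets $I_1 = A_x \cup A_y \cup I_{xy}$ and $I_2 = B_x \cup B_y \cup I_{xy}$. Your key observation (iii), that $xu$ and $yu$ have identical neighbourhoods among external nodes of $\widehat{G}$ because $N[x]=N[y]$, is precisely the twin-property step underlying the paper's replacement argument, and your explicit check that the mirrored $y$-edges are pairwise independent (since the $x$-edges in the solution force their endpoints to form a clique) is a detail the paper leaves implicit.
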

\begin{proof}
First we show that $xy$ is an isolated node in $I_{\widehat{G}}$.
If $xy$ is adjacent to $xu$ then $y$ is non-adjacent to $u$ in $G$ which contradicts the fact that $x$ and $y$ are twins.
Thus $xy$ is an isolated node in $\widehat{G}$ which implies $xy \in I_{\widehat{G}}$.
For the second argument observe that for every vertex $u \in N(x)$, $xu$ and $yu$ are non-adjacent in $I_{\widehat{G}}$.
Let $u \in N(x)$. Then notice that $u \in N(y)$.
This means that if $xu \in I_{\widehat{G}}$ (resp., $yu \in I_{\widehat{G}}$) then $yu$ (resp., $xu$) is a node of $\widehat{G}$.
We define the following sets of nodes in $\widehat{G}$:
\begin{itemize}
\item Let $A_x$ be the set of nodes $xa$ such that $xa \in I_{\widehat{G}}$ and $ya \notin I_{\widehat{G}}$ and let $A_y$ be the set of nodes $ya$ such that $xa \in A_x$.
\item Let $B_y$ be the set of nodes $yb$ such that $yb \in I_{\widehat{G}}$ and $xb \notin I_{\widehat{G}}$ and let $B_x$ be the set of nodes $xb$ such that $yb \in B_y$.
\end{itemize}
It is clear that $A_x \subseteq I_{\widehat{G}}$, $B_y \subseteq I_{\widehat{G}}$, and $A_x \cap B_y = \emptyset$.
Also note that $|A_x| = |A_y|$ and $|B_y| = |B_x|$, since $N[x]=N[y]$.

Let $I_{xy} = I_{\widehat{G}} \setminus \left(A_x \cup B_y\right)$ so that $I_{\widehat{G}} = A_x \cup B_y \cup I_{xy}$.
We show that every node of $A_y$ is non-adjacent to any node of $I_{\widehat{G}} \setminus B_y$.
Let $ya$ be a node of $A_y$. 
If there is a node $az \in I_{\widehat{G}} \setminus B_y$ that is adjacent to $ya$ then $z$ and $y$ are non-adjacent in $G$ which implies that $z$ and $x$ are non-adjacent in $G$.
This however leads to a contradiction because $xa,az \in I_{\widehat{G}}$ and $xa$ is adjacent to $az$ in $\widehat{G}$.
If there is a node $yb \in I_{\widehat{G}}$ that is adjacent to $ya$ then $a$ is non-adjacent to $b$ in $G$ so that $xa$ is also adjacent to $xb$ in $\widehat{G}$. 
This means that $xb \notin I_{\widehat{G}}$ implying that $yb \in B_y$.  
Thus every node of $A_y$ is non-adjacent to any node of $I_{\widehat{G}} \setminus B_y$ and with completely symmetric arguments,
every node of $B_x$ is non-adjacent to any node of $I_{\widehat{G}} \setminus A_x$.
Hence both sets $I_1 = A_x \cup A_y \cup I_{xy}$ and $I_2 = B_x \cup B_y \cup I_{xy}$ form independent sets in $\widehat{G}$.
By the facts that $|A_x| = |A_y|$ and $|B_y| = |B_x|$ we have $|I_1| \geq |I_{xy}|$ whenever $|A_x| \geq |B_y|$ and $|I_2| \geq |I_{xy}|$ whenever $|A_x| < |B_y|$.
Therefore we can safely replace one of the sets $A_x$ or $B_y$ by $B_x$ or $A_y$ and obtain the solutions $I_2$ or $I_1$, respectively.
Now observe that in both solutions $I_1$ and $I_2$ we have $xu \in I_{i}$ if and only if $yu \in I_{i}$, for $i\in \{1,2\}$, and this completes the proof.
\end{proof}

Lemma~\ref{lem:twins} suggests to consider a graph $G$ that has no twin vertices as follows.
We partition $V(G)$ into sets of twins.
A vertex that has no twin appears in its twin set alone.
For every twin set $W_x$ we choose an arbitrary vertex $x$ and remove all its twin vertices except $x$ from $G$. 
Let $G'$ be the resulting graph that has no twin vertices. 
For every edge $\{x,y\}$ of $G'$ we assign a weight equal to the product $|W_x| \cdot |W_y|$.
This value corresponds to all edges of the original graph $G$ between the vertices of $W_x$ and $W_y$.
The line-incompatibility graph $\widehat{G'}$ of $G'$ is constructed as defined above with the only difference that a node of $\widehat{G'}$ has weight equal to the weight of its corresponding edge in $G'$.
Let $I_{\widehat{G'}}$ be a {\it maximum weighted independent set}, that is an independent set of $\widehat{G'}$ such that the sum of the weights of its nodes is maximized.
Then by Lemma~\ref{lem:twins} we have $I_{\widehat{G}} = I_{\widehat{G'}} \cup S(W)$ where $S(W)$ contains $|W_x|(|W_x|-1)/2$ nodes for every twin set $W_x$.
Therefore we are interested in computing the maximum weighted independent set of $\widehat{G'}$.
Also note that $G'$ is an induced subgraph of the original graph $G$.
Since there is no ambiguity in the forthcoming results and in order to avoid heavier notation we refer to $\widehat{G'}$ as $\widehat{G}$ by assuming that $G$ has no twin vertices and every vertex of $G$ has a positive weight.

Before reaching the details of our algorithm for proper interval graphs, let us highlight the difference between the optimal solution for {\sc MaxSTC} and the optimal solution for the {\sc Cluster Deletion}.
As already explained in the Introduction a solution for {\sc Cluster Deletion} fulfils the strong triadic closure, though the converse is not necessarily true.
In fact such an observation carries out for the class of proper interval graphs as shown in the example given in Figure~\ref{fig:proper}.
For the {\sc Cluster Deletion} problem twin vertices can be grouped together following a similar characterization with Lemma~\ref{lem:twins}, as proved in \cite{BDM15}.
Therefore when restricted to proper interval graphs the optimal solution for {\sc Cluster Deletion} does not necessarily imply an optimal solution for {\sc MaxSTC}.

Let $G$ be a proper interval graph and let $\sigma$ be a proper interval ordering for $G$.
We say that a solution $I_{\widehat{G}}$ has the \emph{consecutive strong ordering} with respect to $\sigma$ if for any three vertices $x,y,z$ of $G$ with $x \prec y \prec z$ the following holds: $xz \in I_{\widehat{G}}$ implies $xy, yz \in I_{\widehat{G}}$.
Our task is to show that such an optimal ordering exists.
We start by characterizing the optimal solution $I_{\widehat{G}}$ with respect to the proper interval ordering $\sigma$.

\begin{lemma}\label{lem:ordernoedge}
Let $x,y,z$ be three vertices of a proper interval graph $G$ such that $x \prec y \prec z$.
If $xz \in I_{\widehat{G}}$ then $xy \in I_{\widehat{G}}$ or $yz \in I_{\widehat{G}}$.
\end{lemma}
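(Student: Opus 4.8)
The plan is to argue by contradiction, exploiting the consecutiveness of neighbourhoods forced by a proper interval ordering together with the maximality of $I_{\widehat{G}}$. Throughout I work with the fixed proper interval ordering $\sigma$ and the nodes of $\widehat{G}$, recalling that a node $uv$ of $\widehat{G}$ exists precisely when $\{u,v\}\in E(G)$, and that two such nodes are adjacent in $\widehat{G}$ exactly when the corresponding edges induce a $P_3$.

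First I would record the structural consequences of the hypothesis. Since $xz \in I_{\widehat{G}}$, the pair $\{x,z\}$ is an edge of $G$; applying the proper interval ordering property repeatedly to $x \prec y \prec z$ shows that every two vertices lying in the closed interval $[x,z]$ of $\sigma$ are adjacent, so the vertices of $[x,z]$ induce a clique. In particular $\{x,y\},\{y,z\}\in E(G)$, so $xy$ and $yz$ are genuine nodes of $\widehat{G}$, and the three nodes $xy, yz, xz$ are pairwise non-adjacent in $\widehat{G}$ (each pair shares a vertex while the remaining two endpoints are adjacent, hence no induced $P_3$ arises).

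Next I assume for contradiction that $xy \notin I_{\widehat{G}}$ and $yz \notin I_{\widehat{G}}$. Because $I_{\widehat{G}}$ is a maximum weighted independent set and all weights are positive, it is maximal; hence $xy$ has a neighbour in $I_{\widehat{G}}$, and so does $yz$. The heart of the argument, and the step I expect to be the main obstacle, is to pin down the form of these blocking nodes. A blocker of $xy$ is an edge sharing $x$ or $y$ whose fourth endpoint misses the other shared endpoint. I would first rule out blockers sharing $x$: such a node $\{x,q\}$ with $q$ missing $y$ must, in order to stay non-adjacent to $xz\in I_{\widehat{G}}$, see $z$; but $q$ cannot lie in $[x,z]$ (the clique would force $q$ to see $y$), so $q\prec x$ or $z\prec q$, and in either case the ordering property forces $q$ to see $y$ — a contradiction. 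Hence the blocker of $xy$ shares $y$, has the form $\{y,q\}$ with $q$ missing $x$, and the ordering property pins $q$ to the right of $z$. Applying the symmetric argument to the reverse of $\sigma$ (which is again a proper interval ordering), the blocker of $yz$ has the form $\{y,p\}$ with $p \prec x$ and $p$ missing $z$.

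Finally I would close the loop. Both blockers pass through $y$, with $p \prec x \prec z \prec q$ and $p$ missing $z$; since $p \prec z \prec q$, the ordering property would force $p$ to see $z$ were $\{p,q\}$ an edge, so $p$ and $q$ are non-adjacent. Therefore $\{y,p\}$ and $\{y,q\}$ induce a $P_3$ centred at $y$, i.e.\ they are adjacent in $\widehat{G}$ — yet both lie in the independent set $I_{\widehat{G}}$, a contradiction. This forces $xy \in I_{\widehat{G}}$ or $yz \in I_{\widehat{G}}$, as claimed.
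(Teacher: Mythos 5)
Your proof is correct and takes essentially the same approach as the paper's: you rule out blockers of $xy$ through $x$ (and, symmetrically, blockers of $yz$ through $z$) by playing them against the node $xz \in I_{\widehat{G}}$ and the umbrella property of the ordering, and then derive a contradiction from two $y$-centred blockers lying on opposite sides of the triple. The only cosmetic differences are that you conclude by showing $\{p,q\} \notin E(G)$, so the two blockers are adjacent in $\widehat{G}$ and cannot both lie in the independent set, whereas the paper argues the contrapositive (independence forces $\{y_{\ell},y_r\} \in E(G)$, contradicting the ordering), and that you make explicit the maximality of $I_{\widehat{G}}$ that the paper uses implicitly.
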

\begin{proof}
We show that at least one of $xy$ or $yz$ belongs to $I_{\widehat{G}}$.
Consider the node $xy$ in $\widehat{G}$.
If $xy$ is adjacent to a node $xx_{\ell} \in I_{\widehat{G}}$ then $\{x_{\ell},y\} \notin E(G)$.
Then observe that $x_{\ell} \prec y$ because $x \prec y$ and $\{x_{\ell},y\} \notin E(G)$.
Since both $xx_{\ell}$ and $xz$ belong to $I_{\widehat{G}}$, $\{x_{\ell},z\} \in E(G)$.
This however contradicts the proper interval ordering because $x_{\ell} \prec y \prec z$, $\{x_{\ell},z\} \in E(G)$ and $y$ is non-adjacent to $x_{\ell}$.
Thus $xy$ is non-adjacent to any node $xx_{\ell} \in I_{\widehat{G}}$ and, in analogous fashion, $yz$ is non-adjacent to any node $zz_{r} \in I_{\widehat{G}}$.

Now assume that $xy$ is adjacent to a node $yy_{r} \in I_{\widehat{G}}$ and $yz$ is adjacent to a node $y_{\ell}y \in I_{\widehat{G}}$.
This means that $\{x,y_{r}\} \notin E(G)$ and $\{z,y_{\ell}\} \notin E(G)$.
Since $\{x,z\} \in E(G)$, by the proper interval ordering we have $y_{\ell} \prec x \prec y \prec z \prec y_r$.
Then notice that $\{y_{\ell}, y_r\} \in E(G)$, because both $yy_{r}, yy_{\ell} \in I_{\widehat{G}}$.
By the proper interval ordering we know that both $x$ and $z$ are adjacent to $y_{\ell}, y_r$, leading to a contradiction to the assumptions $\{x,y_{r}\} \notin E(G)$ and $\{z,y_{\ell}\} \notin E(G)$.
Therefore at least one of $xy$ or $yz$ belongs to $I_{\widehat{G}}$.
\end{proof}

Thus by Lemma~\ref{lem:ordernoedge} we have two symmetric cases to consider.
The next characterization suggests that there is a fourth vertex with important properties in each corresponding case.

\begin{lemma}\label{lem:ordernow}
Let $x,y,z$ be three vertices of a proper interval graph $G$ such that $x \prec y \prec z$ and $xz \in I_{\widehat{G}}$.
\begin{itemize}
\item If $xy \notin I_{\widehat{G}}$ and $yz \in I_{\widehat{G}}$ then $xy$ is non-adjacent to any node $x_{\ell}x \in I_{\widehat{G}}$ and there is a vertex $w$ such that $yw \in I_{\widehat{G}}$, $\{x,w\} \notin E(G)$, and $z \prec w$.
\item If $xy \in I_{\widehat{G}}$ and $yz \notin I_{\widehat{G}}$ then $yz$ is non-adjacent to any node $zz_{r} \in I_{\widehat{G}}$ and there is a vertex $w$ such that $wy \in I_{\widehat{G}}$, $\{w,z\} \notin E(G)$ and $w \prec x$.
\end{itemize}
\end{lemma}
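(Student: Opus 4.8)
The plan is to prove the first item in full and to obtain the second one for free by reversing the ordering. So fix the case $xy \notin I_{\widehat{G}}$ and $yz \in I_{\widehat{G}}$. I would first record that since $\{x,z\} \in E(G)$ and $x \prec y \prec z$, the proper interval ordering forces $\{x,y\}, \{y,z\} \in E(G)$, so both $xy$ and $yz$ are genuine nodes of $\widehat{G}$. The non-adjacency claim, that $xy$ is non-adjacent to every node $x_{\ell}x \in I_{\widehat{G}}$, is already established inside the proof of Lemma~\ref{lem:ordernoedge}, so I would simply invoke it; if reproving it in one line: a neighbour $x_{\ell}x$ of $xy$ gives $\{x_{\ell},y\} \notin E(G)$ with $x_{\ell} \prec y$ (otherwise $x \prec y \prec x_{\ell}$ and $\{x,x_{\ell}\} \in E(G)$ would force $\{y,x_{\ell}\} \in E(G)$), while $x_{\ell}x, xz \in I_{\widehat{G}}$ being non-adjacent forces $\{x_{\ell},z\} \in E(G)$, contradicting the proper interval ordering applied to $x_{\ell} \prec y \prec z$.

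The heart of the argument is producing the vertex $w$, and here I would use maximality of $I_{\widehat{G}}$. Since every node weight is positive and $xy \notin I_{\widehat{G}}$, the node $xy$ must have a neighbour in $I_{\widehat{G}}$; otherwise $I_{\widehat{G}} \cup \{xy\}$ would be an independent set of strictly larger weight. Every neighbour of $xy$ in $\widehat{G}$ is an edge sharing exactly one endpoint with $\{x,y\}$ and inducing a $P_3$ with it, hence is either of the form $x_{\ell}x$ (with $\{x_{\ell},y\} \notin E(G)$) or of the form $yw$ (with $\{x,w\} \notin E(G)$). The non-adjacency claim rules out the first form, so the neighbour must be some $yw \in I_{\widehat{G}}$ with $\{x,w\} \notin E(G)$, which already delivers the first two required properties of $w$.

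It then remains to show $z \prec w$. First, $x \prec w$: if instead $w \prec x \prec y$, then $\{w,y\} \in E(G)$ would force $\{w,x\} \in E(G)$ by the proper interval ordering, contradicting $\{x,w\} \notin E(G)$. Next, since $yw, yz \in I_{\widehat{G}}$ share the vertex $y$ and are non-adjacent in $\widehat{G}$, the triple $w,y,z$ does not induce a $P_3$, whence $\{w,z\} \in E(G)$ (in particular $w \neq z$). Finally, assume for contradiction $w \prec z$; combined with $x \prec w$ this gives $x \prec w \prec z$, and then $\{x,z\} \in E(G)$ forces $\{x,w\} \in E(G)$, again contradicting $\{x,w\} \notin E(G)$. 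Hence $z \prec w$, finishing the first item.

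For the second item I would apply the first item to the reverse of $\sigma$, which is again a proper interval ordering and leaves both $\widehat{G}$ and $I_{\widehat{G}}$ unchanged; under the reversal the roles of $x$ and $z$ are interchanged, so the hypotheses $xy \in I_{\widehat{G}}$ and $yz \notin I_{\widehat{G}}$ turn into exactly those of the first item and yield the symmetric conclusion with $w \prec x$. I expect the only mildly delicate point to be the bookkeeping of the two possible forms of a neighbour of $xy$ together with the verification that the inherited non-adjacency claim eliminates precisely the $x$-side neighbours; once that is in place, the placement of $w$ is a routine triple of proper-interval-ordering arguments.
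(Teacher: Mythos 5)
Your proposal is correct and follows essentially the same route as the paper: the paper merely packages the argument contrapositively (assuming no such $w$ exists and deriving that $xy$ has no neighbour in $I_{\widehat{G}}$, contradicting optimality), while you run the identical case analysis directly --- excluding neighbours of the form $x_{\ell}x$ via $xz \in I_{\widehat{G}}$, invoking maximality to force a neighbour $yw$ with $\{x,w\} \notin E(G)$, and ruling out $w \prec x$ and $x \prec w \prec z$ by the same two proper-interval-ordering contradictions, with the second item obtained by reversing $\sigma$ exactly as the paper's ``completely symmetric'' remark intends. One cosmetic point: $w \neq z$ follows immediately from $\{x,z\} \in E(G)$ and $\{x,w\} \notin E(G)$, so your intermediate derivation of $\{w,z\} \in E(G)$ (which presupposes $w \neq z$) is unnecessary.
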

\begin{proof}
Let $xy \notin I_{\widehat{G}}$ and $yz \in I_{\widehat{G}}$.
The case for $xy \in I_{\widehat{G}}$ and $yz \notin I_{\widehat{G}}$ is completely symmetric.
Assume for contradiction that there is no vertex $w$ such that $yw \in I_{\widehat{G}}$, $\{x,w\} \notin E(G)$, and $z \prec w$.
We prove that $xy$ is non-adjacent to any node of $I_{\widehat{G}}$, contradicting the optimality of $I_{\widehat{G}}$.
Suppose first that $xy$ is adjacent to a node $x_{\ell}x \in I_{\widehat{G}}$.
Then $y$ is non-adjacent to $x_{\ell}$ in $G$.
Notice that $x_{\ell} \prec x$ because $y$ is adjacent to $x$ and $x \prec y$.
Due to the fact that $yz \in I_{\widehat{G}}$, we have that $x_{\ell}x$ and $xz$ are non-adjacent in $\widehat{G}$ which implies that $\{x_{\ell},z\} \in E(G)$.
Since $x_{\ell} \prec x \prec y \prec z$ and $\{x_{\ell},z\} \in E(G)$, by the proper interval ordering we get $\{x_{\ell},y\} \in E(G)$ leading to a contradiction.
Thus $xy$ is non-adjacent to any node $x_{\ell}x \in I_{\widehat{G}}$.

Next assume that $xy$ is adjacent to a node $yy_{r} \in I_{\widehat{G}}$.
Then $\{x,y_r\} \notin E(G)$.
By the assumption that there is no vertex $w$ with $yw \in I_{\widehat{G}}$, $\{x,w\} \notin E(G)$, and $z \prec w$, we have $y_r \prec z$.
This particularly means that $y_r \prec x$ or $x \prec y_r \prec z$.
However both cases lead to a contradiction to $\{x,y_r\} \notin E(G)$ since in the former case we have $\{y_r,y\} \in E(G)$ and $y_r \prec x \prec y$, and in the latter case we know that $\{x,z\}\in E(G)$.
Therefore $xy$ has no neighbor in $I_{\widehat{G}}$ reaching a contradiction to the optimality of $I_{\widehat{G}}$.
\end{proof}

Now we are ready to show that that there is an optimal solution that has the described properties with respect to the given proper interval ordering.

\begin{lemma}\label{lem:consecutivestrong}
There exists an optimal solution $I_{\widehat{G}}$ that has the consecutive strong property with respect to $\sigma$.
\end{lemma}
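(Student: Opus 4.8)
The plan is to start from an optimal solution $I_{\widehat{G}}$ that, among all optimal solutions, minimizes the total span $\Phi(I_{\widehat{G}}) = \sum_{uv \in I_{\widehat{G}}} (\pi(v)-\pi(u))$, where $\pi(v)$ denotes the position of $v$ in $\sigma$ and $u \prec v$. Suppose for contradiction that $I_{\widehat{G}}$ does not have the consecutive strong property, so there are $x \prec y \prec z$ with $xz \in I_{\widehat{G}}$ but $\{xy,yz\} \not\subseteq I_{\widehat{G}}$. By Lemma~\ref{lem:ordernoedge} exactly one of $xy,yz$ is missing, and since the reverse of $\sigma$ is again a proper interval ordering I may assume the first case of Lemma~\ref{lem:ordernow}: $xy \notin I_{\widehat{G}}$, $yz \in I_{\widehat{G}}$, and there is a vertex $w$ with $z \prec w$, $yw \in I_{\widehat{G}}$, and $\{x,w\} \notin E(G)$. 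Using the proper interval ordering one verifies that $x \prec y \prec z \prec w$ and that all pairs among these four vertices are edges of $G$ except $\{x,w\}$ (for instance $\{x,y\}\in E(G)$ follows from $x\prec y\prec z$ and $\{x,z\}\in E(G)$, and $\{z,w\}\in E(G)$ from $y\prec z\prec w$ and $\{y,w\}\in E(G)$).

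Next I would consider two exchanges on this configuration: swap $A$ removes the node $yw$ and inserts $xy$, and swap $B$ removes $xz$ and inserts $zw$ (note $zw \notin I_{\widehat{G}}$, since $xz$ and $zw$ form a $P_3$ through $z$ as $\{x,w\}\notin E(G)$). Granting for the moment that $I_A = (I_{\widehat{G}}\setminus\{yw\})\cup\{xy\}$ and $I_B = (I_{\widehat{G}}\setminus\{xz\})\cup\{zw\}$ are independent sets of $\widehat{G}$, the contradiction follows from two elementary computations. Recall the weight of a node $uv$ is the product $|W_u|\cdot|W_v|$; hence the weight change under $A$ is $|W_y|(|W_x|-|W_w|)$ and under $B$ is $|W_z|(|W_w|-|W_x|)$, and since $|W_y|,|W_z|>0$ these have opposite sign. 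If $|W_x|\neq|W_w|$, one of the two legal swaps strictly increases the total weight, contradicting the optimality of $I_{\widehat{G}}$. Hence $|W_x|=|W_w|$ and both $I_A,I_B$ are again optimal. For the spans, swap $A$ changes $\Phi$ by $(\pi(y)-\pi(x))-(\pi(w)-\pi(y))$ and swap $B$ by $(\pi(w)-\pi(z))-(\pi(z)-\pi(x))$; these telescope to a sum equal to $2(\pi(y)-\pi(z))<0$. Thus at least one of $I_A,I_B$ is an optimal solution of strictly smaller span, contradicting the minimality of $\Phi(I_{\widehat{G}})$. Either way we reach a contradiction, so no violating triple exists and $I_{\widehat{G}}$ has the consecutive strong property.

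The crux, and the step I expect to be the main obstacle, is precisely the deferred claim that $I_A$ and $I_B$ are independent, i.e.\ that inserting $xy$ (resp.\ $zw$) creates no new $P_3$ in the spanned graph beyond the one destroyed by deleting $yw$ (resp.\ $xz$). Equivalently, $xy$ must be adjacent in $\widehat{G}$ to no node of $I_{\widehat{G}}\setminus\{yw\}$. The edges incident to $x$ cause no conflict: Lemma~\ref{lem:ordernow} rules out the nodes $x_{\ell}x\in I_{\widehat{G}}$, and the proper interval ordering forces every strong right-neighbour $x_r$ of $x$ with $y\prec x_r$ to also see $y$, so $xy$ and $xx_r$ do not form a $P_3$. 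The delicate part is the strong right-neighbours $y'$ of $y$ lying beyond the right reach of $x$ (i.e.\ with $\{x,y'\}\notin E(G)$), since $xy$ forms a $P_3$ with each such $yy'\in I_{\widehat{G}}$; the symmetric difficulty for $zw$ concerns the strong right-neighbours $w'$ of $w$ with $\{z,w'\}\notin E(G)$.

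I would resolve this by choosing the violating configuration extremally rather than arbitrarily — for instance taking $z$ to be a leftmost strong right-neighbour of $x$ that admits a hole, $y$ the hole immediately preceding it, and $w$ a closest witness supplied by Lemma~\ref{lem:ordernow} — and then arguing, again via the proper interval ordering, that any additional conflicting node $yy'$ or $ww'$ would itself yield a violating triple strictly nested inside the chosen one (e.g.\ two such neighbours $w_1\prec w_2$ of $y$ are themselves adjacent, so a missing $w_1w_2$ produces a smaller violation). This collapses both conflict sets to the single removed node and makes swaps $A$ and $B$ legal, after which the weight/span dichotomy of the previous paragraph finishes the proof. I expect establishing this extremality-based reduction of the conflict sets to be the only genuinely technical point; the accounting itself is immediate.
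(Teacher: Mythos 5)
Your global scheme---the two complementary exchanges built on the quadruple $x \prec y \prec z \prec w$ supplied by Lemmas~\ref{lem:ordernoedge} and~\ref{lem:ordernow}, the opposite-sign weight changes, and a secondary potential forcing a contradiction---is the same mechanism as the paper's, and your span potential $\Phi$ is a legitimate substitute for the paper's decreasing count of conflicts. But the gap you flagged is real, and your proposed repair of it fails. The conflict set of the inserted node $xy$ consists of \emph{all} nodes $yw_i$ with $w_i \in W(x) = \{w_i : yw_i \in I_{\widehat{G}},\ \{x,w_i\} \notin E(G)\}$, and no extremal choice of the violating configuration collapses it to a singleton. Concretely, take the proper interval ordering $x \prec y \prec z \prec w_1 \prec w_2$ with all pairs adjacent except $\{x,w_1\}$ and $\{x,w_2\}$, and a solution with $\{xz, yz, yw_1, yw_2, w_1w_2\} \subseteq I_{\widehat{G}}$; these five nodes are pairwise non-adjacent in $\widehat{G}$ since $\{x,y\},\{z,w_i\},\{y,w_i\},\{w_1,w_2\} \in E(G)$. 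Here $W(x)=\{w_1,w_2\}$, yet your nesting argument finds nothing: because $w_1w_2 \in I_{\widehat{G}}$, the triple $(y,w_1,w_2)$ satisfies the consecutive strong property and there is no ``strictly nested violating triple,'' while $xy$ still conflicts with two nodes of $I_{\widehat{G}}$. (The triples $(y,z,w_i)$ are also conflicts, but re-basing on them reproduces the same quadruple with the same two-element set $W(x)$.) So your single swaps $I_A, I_B$ are illegal in general, and the per-node weight dichotomy $|W_y|(|W_x|-|W_w|)$ versus $|W_z|(|W_w|-|W_x|)$ does not apply.

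The paper handles this multiplicity head-on: after fixing $y,z$ leftmost/rightmost with all intermediate vertices strongly joined to both, $x$ leftmost with $xz \in I_{\widehat{G}}$, and $w$ rightmost with $yw \in I_{\widehat{G}}$, it exchanges the \emph{entire} sets, $I(Y_x) = (I_{\widehat{G}} \setminus Y_w) \cup Y_x$ and $I(Z_w) = (I_{\widehat{G}} \setminus Z_x) \cup Z_w$, where $Y_w = \{yw_i : w_i \in W(x)\}$, $Y_x = \{x_jy : x_j \in X(w)\}$, $Z_x = \{x_jz : x_j \in X(w)\}$, $Z_w = \{zw_i : w_i \in W(x)\}$; the weight comparison is then between the sums over $X(w)$ and $W(x)$, and the independence of the swapped solutions is precisely what the structural properties proved there (e.g., every node $w_ib_1 \in I_{\widehat{G}}$ satisfies $\{z,b_1\} \in E(G)$, and every $b_2$ strictly between $z$ and $w$ outside $W(x)$ either has $yb_2 \notin I_{\widehat{G}}$ or sees all of $X(w)$) together with the extremal choices deliver. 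Your span potential would in fact survive this set-based repair: the two set exchanges still have opposite-sign weight changes, and their span changes sum to $(|W(x)|+|X(w)|)(\pi(y)-\pi(z)) < 0$, so your accounting could replace the paper's conflict-counting termination argument. But the deferred independence verification is the actual body of the proof, and the singleton-collapse route you sketch for it is not merely unproven---it is false.
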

\begin{proof}
Let $\sigma$ be a proper interval ordering for $G$.
Assume for contradiction that $I_{\widehat{G}}$ does not have the consecutive strong property.
Then there exists a {\it conflict} with respect to $\sigma$, that is, there are three vertices $x,y,z$ with $x \prec y \prec z$ and $xz \in I_{\widehat{G}}$ such that $xy \notin I_{\widehat{G}}$ or $yz \notin I_{\widehat{G}}$.
We will show that as long as there are conflicts in $\sigma$, we can reduce the number of conflicts in $\sigma$ without affecting the value of the optimal solution $I_{\widehat{G}}$.
Consider such a conflict formed by the three vertices $x \prec y \prec z$ with $xz \in I_{\widehat{G}}$.
By Lemma~\ref{lem:ordernoedge} we know that $xy \in I_{\widehat{G}}$ or $yz \in I_{\widehat{G}}$.
Assume that $yz \in I_{\widehat{G}}$.
Then clearly $xy \notin I_{\widehat{G}}$, for otherwise there is no conflict.
Then by Lemma~\ref{lem:ordernow} there is a vertex $w$ such that $yw \in I_{\widehat{G}}$, $\{x,w\} \notin E(G)$, and $x \prec y \prec z \prec w$.
Notice that both triples $x,y,z$ and $y,z,w$ create conflicts in $\sigma$.

We start by choosing an appropriate such conflict that is formed by four vertices $x,y,z,w$ so that $x \prec y \prec z \prec w$, $xz,yz,yw \in I_{\widehat{G}}$, and $\{x,w\} \notin E(G)$.
Fix $y$ and $z$ in $\sigma$ with $y,z$ being the leftmost and the rightmost vertices, respectively, such that for every vertex $v$ with $y \prec v \prec z$, $yv,vz \in I_{\widehat{G}}$ holds.
Recall that $yz \in I_{\widehat{G}}$.
We choose $x$ as the leftmost vertex such that $xz \in I_{\widehat{G}}$ and we choose $w$ as the rightmost vertex such that $yw \in I_{\widehat{G}}$.
Observe that $\{x,w\} \notin E(G)$ since $y$ and $z$ participate in a conflict.
Due to the properties of the considered conflicts all such vertices exist (see for e.g., Figure~\ref{fig:orderings}).

Let $W(x)$ be the set of vertices $w_i$ such that $yw_i \in I_{\widehat{G}}$ and $\{x,w_i\} \notin E(G)$, and
let $X(w)$ be the set of vertices $x_j$ such that $x_jz \in I_{\widehat{G}}$ and $\{x_j,w\} \notin E(G)$.
Observe that $w \in W(x)$ and $x \in X(w)$ such that $w$ is the rightmost vertex in $W(x)$ and $x$ is the leftmost vertex in $X(w)$.
For a vertex $w_i$ of $W(x)$ observe the following.
If $w_i \prec x$ then $\{w_i,x\} \in E(G)$ because $\{w_i,y\}\in E(G)$ and if $x \prec w_i \prec z$ then $\{w_i,x\} \in E(G)$ because $\{x,z\}\in E(G)$.
Thus $z \prec w_i$ which implies that $\{z,w_i\} \in E(G)$ since $\{y,w_i\}\in E(G)$.
If $zw_i \in I_{\widehat{G}}$ then by the fact that $xz \in I_{\widehat{G}}$ we have $\{x,w_i\} \in E(G)$ contradicting the definition of $W(x)$.
Moreover for every vertex $b_1$ such that $w_ib_1 \in I_{\widehat{G}}$ notice that $x \prec b_1$ since $\{x,w_i\} \notin E(G)$.
If $x \prec b_1 \prec w_i$ then $\{z,b_1\} \in E(G)$ since $x \prec z \prec w_i$;
and if $w_i \prec b_1$ then due to the fact that $yw_i, w_ib_1 \in I_{\widehat{G}}$ and $\{y,b_1\} \in E(G)$ we have again $\{z,b_1\} \in E(G)$ since $y \prec z \prec b_1$.
Furthermore consider a vertex $b_2$ such that $z \prec b_2 \prec w$ and $b_2 \notin W(x)$.
This means that $yb_2 \notin I_{\widehat{G}}$ or $yb_2 \in I_{\widehat{G}}$ with $\{b_2,x\}\in E(G)$.
The latter case implies that $b_2$ is adjacent to every vertex of $X(w)$, since $x$ is the leftmost vertex in $X(w)$ and every vertex of $X(w)$ is to the left of $z$.
Hence for every vertex $w_i$ of $W(x)$ the following hold:
\begin{itemize}
\item $z \prec w_i$,
\item $zw_i \notin I_{\widehat{G}}$,
\item for every node $w_ib_1 \in I_{\widehat{G}}$, $\{z,b_1\} \in E(G)$, and
\item for every vertex $b_2$ with $z \prec b_2 \prec w$ and $b_2 \notin W(x)$, $yb_2 \notin I_{\widehat{G}}$ or $b_2$ is adjacent to every vertex of $X(w)$.
\end{itemize}
With symmetric arguments for every vertex $x_j$ of $X(w)$ we have the following:
\begin{itemize}
\item $x_j \prec y$,
\item $x_jy \notin I_{\widehat{G}}$,
\item for every node $a_1x_j \in I_{\widehat{G}}$, $\{a_1,y\} \in E(G)$, and
\item for every vertex $a_2$ with $x \prec a_2 \prec y$ and $a_2 \notin X(w)$, $a_2z \notin I_{\widehat{G}}$ or $a_2$ is adjacent to every vertex of $W(x)$.
\end{itemize}
The topmost ordering given in Figure~\ref{fig:orderings} illustrates the corresponding cases.

\begin{figure}[t]
\centering
\includegraphics[width=\textwidth]{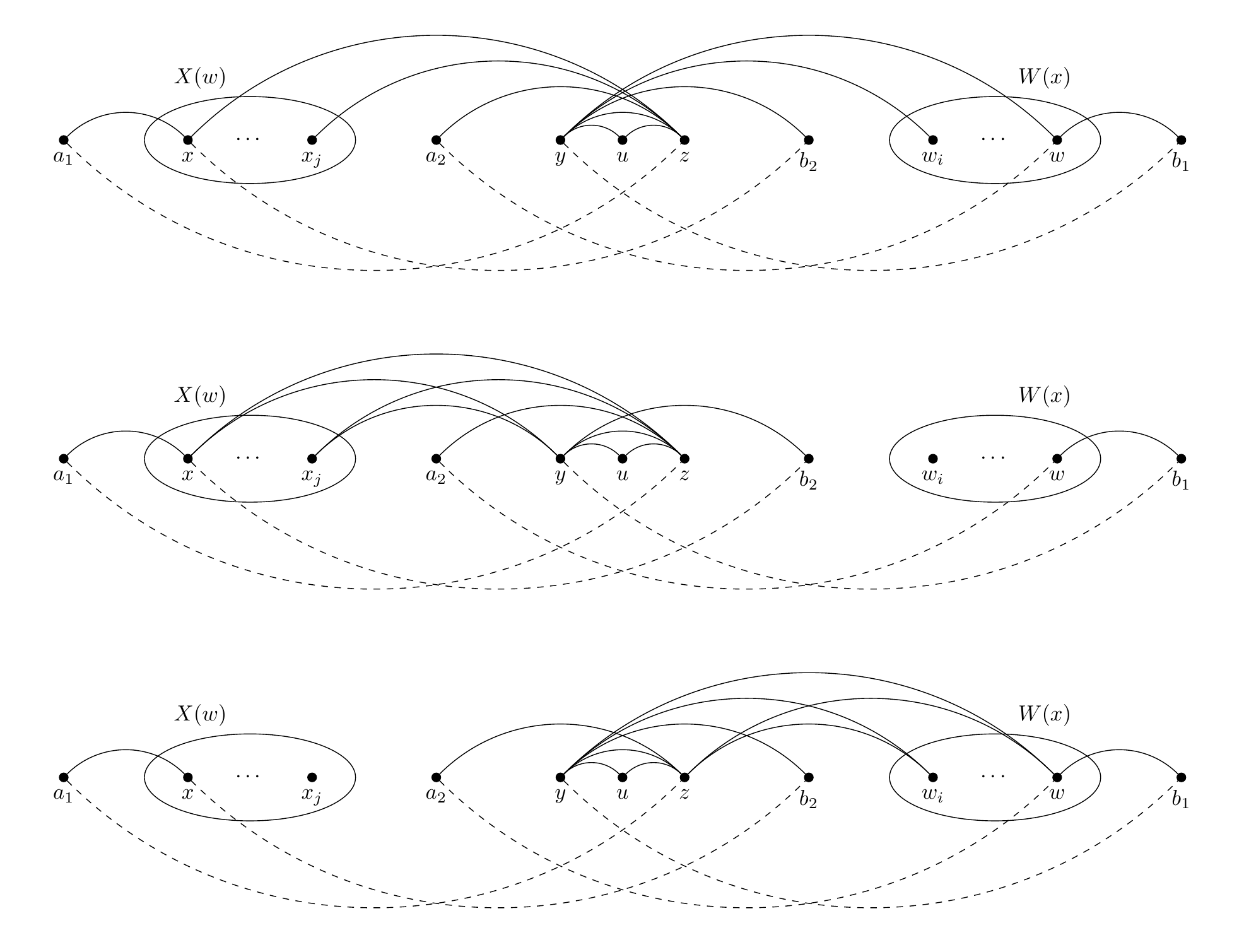}
\caption{A proper interval ordering for a graph $G$ with three different solutions considered in the proof of Lemma~\ref{lem:consecutivestrong}.
A solid edge corresponds to a node of $\widehat{G}$ that belongs to $I_{\widehat{G}}$, which means that such an edge is labeled strong in an optimal strong-weak labeling,
whereas a dashed edge corresponds to a node of $\widehat{G}$ that does not belong to $I_{\widehat{G}}$, which means that such an edge is labeled weak in an optimal strong-weak labeling.
Observe that the lowest two orderings contain less {\it conflicts} than the topmost, that is, triple of vertices that violate the consecutive strong property.}\label{fig:orderings}
\end{figure}

Let $Y_w$ be the set of nodes $yw_i$ in $\widehat{G}$ such that $w_i \in W(x)$, and let $Z_x$ be the set of nodes $x_jz$ in $\widehat{G}$ such that $x_j \in X(w)$.
Observe that $Y_w, Z_x \subseteq I_{\widehat{G}}$ by the previous arguments.
We show that removing either $Y_w$ or $Z_x$ from $I_{\widehat{G}}$ does not create any new conflict.
Let $yw_i \in Y_w$ and let $u$ be a vertex such that $uy \in I_{\widehat{G}}$ and $uw_i \in I_{\widehat{G}}$.
If $y \prec u \prec w_i$ then no conflict is created by removing $yw_i$ from $I_{\widehat{G}}$.
Assume that $u \prec y \prec w_i$. Observe that $x \prec u \prec z$.
Then $xu \notin I_{\widehat{G}}$ because $\{x,w_i\} \notin E(G)$.
Since $xz \in I_{\widehat{G}}$ and at least one of $xu$, $uz$ belongs to $I_{\widehat{G}}$, we have $uz \in I_{\widehat{G}}$.
However this contradicts the leftmost choice for $y$ in $x \prec u \prec y \prec z$ and there is no such vertex $u$.
Next assume that $y \prec w_i \prec u$. Since $w_i$ is non-adjacent to $x$ and $w_i \prec u$, $u$ is non-adjacent to $x$, as well.
Then according to the definition of $W(x)$, $u \in W(x)$ and $yu \in Y_w$.
The case for the nodes of $Z_x$ is completely symmetric.
Thus no conflicts are created by removing the nodes of $Y_w$ or the nodes of $Z_x$ from $I_{\widehat{G}}$.

Let $Y_x$ be the set of nodes $x_jy$ in $\widehat{G}$ such that $x_j \in X(w)$, and let $Z_w$ be the set of nodes $zw_i$ in $\widehat{G}$ such that $w_i \in W(x)$.
We denote by $I(Y_x)$ and $I(Z_w)$ the following sets of nodes:
$I(Y_x) = \left(I_{\widehat{G}} \setminus Y_w\right) \cup Y_x$ and
$I(Z_w) = \left(I_{\widehat{G}} \setminus Z_x\right) \cup Z_w$.
We show that both sets form independent sets in $\widehat{G}$.
Consider the case for $I(Y_x)$.
Let $x_jy$ be a node of $I(Y_x)$. There are two cases to consider:
there is a node $ux_j \in \left(I_{\widehat{G}} \setminus Y_w\right) \cup Y_x$ and $\{u,y\} \notin E(G)$ or
there is a node $yv \in \left(I_{\widehat{G}} \setminus Y_w\right) \cup Y_x$ and $\{x_j,v\} \notin E(G)$.
In the former case we know that $ux_j \in I_{\widehat{G}}$ and for every node $ux_j \in I_{\widehat{G}}$, $\{u,y\}$ must be an edge of $G$ which leads to a contradiction to the non-adjacency of $u$ and $y$.
For the latter case observe that $yv \in I_{\widehat{G}} \setminus Y_w$ and $v \notin W(x)$.
Since $\{x_j,v\} \notin E(G)$ and $\{y,v\} \in E(G)$, we have $z \prec v$ and by the rightmost choice of $w$ for $y$ we have $z \prec v \prec w$.
This however implies that $z \prec v \prec w$, $v \notin W(x)$ and $yv \in I_{\widehat{G}}$ showing that $\{x_j,v\} \in E(G)$.
Completely symmetric arguments hold for $I(Z_w)$.
The two lowest orderings given in Figure~\ref{fig:orderings} illustrate the considered cases.
Thus $I(Y_x)$ and $I(Z_w)$ form independent sets in $\widehat{G}$.

Now observe that both $I(Y_x)$ and $I(Z_w)$ have a smaller number of conflicts with respect to $I_{\widehat{G}}$ because either $x,y,z$ in $I(Y_x)$ or $y,z,w$ in $I(Z_w)$ satisfy the consecutive strong property.
It is clear that the difference between $I(Y_x)$ and $I_{\widehat{G}}$ are the nodes of $Y_x$ and $Y_w$, whereas the difference between $I(Z_w)$ and $I_{\widehat{G}}$ are the nodes of $Z_w$ and $Z_x$.
For a set $A$ of vertices having positive weights, denote by $M(A)$ the sum of the weights of its vertices.
If $M(X(w)) \geq M(Z(x))$ then $M(I(Y_x)) \geq M(I_{\widehat{G}})$ and if $M(X(w)) < M(Z(x))$ then $M(I(Z_w)) > M(I_{\widehat{G}})$.
Thus in any case we can replace appropriate set of nodes in $I_{\widehat{G}}$ and obtain an optimal solution with a smaller number of conflicts.
Therefore by applying such a replacement in every such conflict, we get an optimal solution that has no conflicts and, thus, it satisfies the consecutive strong property.
\end{proof}

Lemma~\ref{lem:consecutivestrong} suggests to find an optimal solution that has the consecutive strong property with respect to $\sigma$.
In fact by Proposition~\ref{prop:MaxSTCIndependent} and the proper interval ordering, this reduces to computing the largest proper interval subgraph $H$ of $G$ such that the vertices of every $P_3$ of $H$ induce a clique in $G$.

Let $G$ be a proper interval graph and let $\sigma = \langle v_1, \ldots, v_n \rangle$ be its proper interval ordering.
For a vertex $v_i$ we denote by $v_{\ell(i)}$ and $v_{r(i)}$ its leftmost and rightmost neighbors, respectively, in $\sigma$.
Observe that for any two vertices $v_i \prec v_j$ in $\sigma$, $v_{\ell(i)} \preceq v_{\ell(j)}$ and $v_{r(i)} \preceq v_{r(j)}$ \cite{DHH96}.
For $1 \leq i \leq r(1)$, let $V_i = \{v_1, \ldots, v_i\}$, that is, $V_i$ contains the {\it first} $i$ vertices in $\sigma$.
Given the set $V_i$, let $r$ be an integer such that $1 \leq i \leq r \leq r(1)$.

Let $A(G)$ be the value of an optimal solution $I_{\widehat{G}}$ for $G$.
For the set $V_i$ we denote by $B(V_i)$ the value that corresponds to the total weight of the edges incident to $v_1$ and each of $v_2, \ldots, v_i$.
Observe that any subset of vertices of $V_i$ induces a clique in $G$.
We denote by $C(V_i)$ the value that corresponds to the total weight of the edges among all vertices of $V_i$. 
Given the first vertices $V_i$ and $i \leq r \leq r(1)$, let $A(G,V_i, r)$ be the value of the optimal solution $I(G,V_i, r)$ in $G$ such that every edge among the vertices of $V_i$ belongs to $I(G,V_i, r)$ and $uv_k \notin I(G,V_i, r)$ with $u \in V_i$ and $k>r$.
As a trivial case observe that if $G$ contains exactly two vertices $v_1,v_2$ then $A(G) = B(\{v_1,v_2\}) = C(\{v_1,v_2\})$.

\begin{lemma}\label{lem:optreccurence}
Let $G$ be a proper interval graph and let $V_i$ and $r$ such that $1 \leq i \leq r \leq r(1)$. Then, $A(G) = A(G, \emptyset, r) = A(G, \{v_1\}, r(1))$, and 
\[
A(G, V_i, r) =
\begin{cases}
\max\limits_{i \leq j \leq r} \left\{ A(G-\{v_1\}, V_j \setminus \{v_1\}, r) + B(V_j) \right\} & \text{ if } i < r,\\
A(G - V_i, \{v_{i+1}\}, r(i+1)) + C(V_i)  & \text{ if } i = r \text{ and } i < n,\\
C(V_i) & \text{otherwise}. 
\end{cases}
\]
\end{lemma}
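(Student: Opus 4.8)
The plan is to prove the two stated identities and then the three-case recurrence by induction on $|V(G)|$, throughout working with an optimal solution $I_{\widehat{G}}$ that has the consecutive strong property with respect to $\sigma$, which exists by Lemma~\ref{lem:consecutivestrong}. The two identities $A(G)=A(G,\emptyset,r)=A(G,\{v_1\},r(1))$ are immediate once one observes that both constraint families are vacuous: for $V_i=\emptyset$ there is nothing to force and nothing to forbid, while for $V_i=\{v_1\}$ the clique constraint is empty and $v_1$ has no neighbour to the right of position $r(1)$ at all, so the forbidding constraint is void as well. I also record an invariant, maintained by the recursion and following from the monotonicity $v_{r(i)}\preceq v_{r(j)}$ for $v_i\prec v_j$, that the bound always satisfies $r\le r(1)$ in the current graph; hence $v_1$ sees every vertex up to position $r$ and $\{v_1,\dots,v_r\}$ induces a clique in $G$.

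For the line $i=r$ (and its degenerate sub-case $i=r=n$) the argument is a clean split. By the constraints every edge inside $V_r$ is strong, contributing exactly $C(V_r)$, and no strong edge leaves $V_r$ to the right; since $V_r$ reaches only inside $\{v_1,\dots,v_r\}$, which is a clique, every $P_3$ among these strong edges is a triangle and the all-strong labeling of $V_r$ is valid by Observation~\ref{obs:stcP3Clique}. A node of $\widehat{G}$ lying inside $V_r$ and a node lying inside $G-V_r$ correspond to vertex-disjoint edges, so they are non-adjacent in $\widehat{G}$; therefore the solution splits as $C(V_r)$ plus an unconstrained optimum on $G-V_r$, and the latter equals $A(G-V_r,\{v_{r+1}\},r(r+1))$ by the already-proved identity applied to $G-V_r$ (whose leftmost vertex is $v_{r+1}$). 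When $i=r=n$ the graph is the single clique $V_n$ and the value is $C(V_n)$.

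For the line $i<r$ the inequality $A(G,V_i,r)\ge\max_{i\le j\le r}\{A(G-\{v_1\},V_j\setminus\{v_1\},r)+B(V_j)\}$ is the routine direction. Given $j$ and a feasible solution of the subproblem I add the star of strong edges from $v_1$ to $v_2,\dots,v_j$, of total weight $B(V_j)$. This keeps all of $V_i$ strong and confines $v_1$ to positions $\le j\le r$, so both constraints survive; moreover a new node $v_1v_m$ can only be adjacent in $\widehat{G}$ to a subproblem node sharing $v_m$, say $v_mv_p$ (the star is internally conflict-free since $\{v_1,\dots,v_j\}$ is a clique), and such an adjacency would force $\{v_1,v_p\}\notin E(G)$. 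If $p\le j$ this is false because $v_p$ lies in that clique, and if $p>j$ then the subproblem bound forces $p\le r\le r(1)$, so $v_1$ sees $v_p$; hence no conflict is created and the extension is feasible.

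The reverse inequality is the crux. Starting from an optimal consecutive-strong $I_{\widehat{G}}$, let $j$ be the rightmost position with $v_1v_j\in I_{\widehat{G}}$; the consecutive strong property forces $\{v_1,\dots,v_j\}$ to be an all-strong clique, and the constraints give $i\le j\le r$, so deleting $v_1$'s star (weight $B(V_j)$) leaves a consecutive-strong solution on $G-\{v_1\}$ keeping $\{v_2,\dots,v_j\}$ all-strong. The one thing that may fail is the bound: a newly promoted clique vertex $v_m$ with $i<m\le j$ could still carry a strong edge to some $v_k$ with $r<k\le r(1)$, in which case the residual is \emph{not} feasible for $A(G-\{v_1\},V_j\setminus\{v_1\},r)$. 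This genuinely happens for some optima — on $K_4$ minus one edge the strong path $v_1v_2,v_2v_3,v_3v_4$ and the strong triangle on $\{v_1,v_2,v_3\}$ have equal weight, yet only the latter is faithful to the recursion — so the task is to exhibit an \emph{equally good} optimum in which no clique vertex crosses the bound. I expect this to be the main obstacle, and I would resolve it by a weight-preserving exchange in the spirit of the replacements used in Lemma~\ref{lem:consecutivestrong}: reroute each bound-crossing strong edge $\{v_m,v_k\}$ with $v_m$ in the clique into the missing clique edges inside $\{v_1,\dots,v_r\}$ (available because this set is a clique of $G$), balancing the weight that leaves the clique to the right against the weight recoverable inside it, just as that proof balances $M(X(w))$ against $M(Z(x))$. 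With a faithful optimum in hand, its residual is feasible for the subproblem with bound $r$, giving $A(G,V_i,r)=B(V_j)+w(\text{residual})\le B(V_j)+A(G-\{v_1\},V_j\setminus\{v_1\},r)\le\max_{i\le j\le r}\{\cdots\}$ and completing the induction.
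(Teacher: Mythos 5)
Your overall route is the same as the paper's: reduce to a consecutive-strong optimum via Lemma~\ref{lem:consecutivestrong}, dispose of the identities and the case $i=r$ by the clique/disjointness split, and for $i<r$ peel off the star of $v_1$ at its rightmost strong neighbour $v_j$. The identities, the $i=r$ case, and your soundness direction ($\geq$) are correct, and the last of these is in fact argued more carefully than in the paper. The genuine gap is exactly where you flag it, and you do not close it: for the reverse inequality you reduce everything to exhibiting an equally good optimum in which no vertex $v_m$ with $m\le j$ carries a strong edge to a position $k$ with $r<k\le r(1)$, and then write only that you ``would resolve it by a weight-preserving exchange in the spirit of Lemma~\ref{lem:consecutivestrong}.'' That exchange is not a routine variant of the one in that lemma (which swaps the sets $Y_w/Y_x$ and $Z_x/Z_w$ inside a four-vertex conflict); here one must trade star edges of the first vertex against bound-crossing edges while re-verifying independence in $\widehat{G}$ and both side constraints of $A(\cdot,V_i,r)$. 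Worse, your proposed recipe --- rerouting each crossing edge ``into the missing clique edges inside $\{v_1,\dots,v_r\}$'' --- can have nothing to reroute into. Take $G$ on $v_1,\dots,v_5$ with edges $v_1v_2,v_1v_3,v_2v_3,v_2v_4,v_3v_4,v_3v_5,v_4v_5$, and the subproblem $A(G-\{v_1\},\{v_2\},3)$ spawned at the top level: the constrained optimum $\{v_2v_3,v_3v_4,v_4v_5\}$ has $v_3$ as the rightmost strong neighbour of $v_2$ but contains the crossing edge $v_3v_4$, the prefix clique $\{v_2,v_3\}$ has no missing internal edge to absorb weight, and the only repair is the structurally different optimum $\{v_3v_4,v_3v_5,v_4v_5\}$, obtained by deleting the star edge $v_2v_3$ and compensating with $v_3v_5$ \emph{outside} the prefix. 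So the needed exchange exists in examples but is of a different shape than you sketch, and without it the first line of the recurrence --- the entire content of the lemma --- is unproved. Also, your claimed witness is wrong: on $K_4$ minus $\{v_1,v_4\}$ one has $r=r(1)=3$, so no position $k$ satisfies $r<k\le r(1)$, and the path $v_1v_2,v_2v_3,v_3v_4$ \emph{does} decompose faithfully at $j=2$ (star weight $B(V_2)=1$ plus the residual $\{v_2v_3,v_3v_4\}$, which is feasible for $A(G-\{v_1\},\{v_2\},3)$ since $v_2v_4$ is absent). The obstruction only appears in deeper subproblems, where the inherited bound $r$ is strictly smaller than the reach of the current first vertex.

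To be fair, you have put your finger on precisely the point where the paper's own proof is thinnest: the paper fixes $j$ maximal with $v_1v_j\in I_{\widehat{G}}$, rules out crossing nodes $v_{j'}v_k$ only for $k>r(1)$ (via the $\widehat{G}$-conflict with $v_1v_j$), and then asserts that $v_jv_k\in I(G,V_i,r)$ ``only if $k\le r$,'' which does not follow from the displayed argument when $r<k\le r(1)$. Likewise, both you and the paper apply Lemma~\ref{lem:consecutivestrong} --- proved for unconstrained optima of $\widehat{G}$ --- to the constrained solutions $I(G,V_i,r)$ without checking that its replacements preserve the constraints that all edges inside $V_i$ stay strong and no prefix vertex is strongly matched past $r$. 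So your diagnosis of the crux is accurate and your plan is structurally the paper's, but as submitted the proposal stops at announcing the decisive exchange rather than performing it, and that constitutes a genuine gap.
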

\begin{proof}
We show that $A(G)$ computes the value of an optimal solution that satisfies the consecutive strong property.
By Lemma~\ref{lem:consecutivestrong} such an ordering exists.
Since there is no edge between $v_1$ and $v_k$ with $k > r(1)$, it follows that $A(G) = A(G, \{v_1\}, r(1))$.
Observe that every induced subgraph of a proper interval graph is proper interval, which implies that the graphs $G - \{v_1\}$ and $G - V_i$ remain proper interval.
Let $V_i= \{v_1, \ldots, v_i\}$ be the first $i$ vertices of $G$.
According to Lemma~\ref{lem:consecutivestrong} if $v_iv_k \notin I(G,V_i,r)$ with $k>i$, then every node $v_jv_k$, $1 \leq j \leq i$, does not belong to $I(G,V_i,r)$.
Again by Lemma~\ref{lem:consecutivestrong} if $v_1v_k \in I(G,V_i,r)$ with $k>i$, then every node $v_jv_k$, $1 \leq j \leq i$, belongs to $I(G,V_i,r)$.

Let $I(V_i)$ be the nodes of $\widehat{G}$ corresponding to the edges of $G[V_i]$.
Observe that by the definition of $I(G, V_i, r)$, $I(V_i) \subseteq I(G, V_i, r)$.
If $i=r$ then it is clear that the nodes of $I(V_i)$ belong to $I(G, V_i, r)$.
Moreover if $i = r < n$ then no node $v_iv_k$, with $i<k$, can be added in $I(G, V_i, r)$, implying that all the nodes corresponding to the edges between $V_i$ and $G - V_i$ do not belong to $I(G, V_i, r)$.
Thus we can safely add $I(V_i)$ to an optimal solution for $G - V_i$, that is, $I(G - V_i, \{v_{i+1}\}, r(i+1))$.

Assume that $i < r$.
To see that $I(V_i) \subseteq I(G, V_i, r)$, notice that $V_i \subseteq V_j$ for every $i \leq j \leq r$ and $C(V_i) = B(V_i) + B(V_{i}\setminus\{v_1\}) + \cdots + B(\{v_{i-1},v_i\})$.
Consider the nodes $v_1v_j$ of $\widehat{G}$ corresponding to the edges incident to $v_1$.
Let $v_1v_j \in I_{\widehat{G}}$ with $j$ as large as possible.
Then $i \leq j \leq r \leq r(1)$.
By Lemma~\ref{lem:consecutivestrong} every node $v_1v_{j'}$ of $\widehat{G}$ with $1 \leq j' \leq j$ belongs to $I_{\widehat{G}}$.
Thus $I(V_j) \subseteq I(G, V_i, r)$.
Furthermore for every node $v_jv_k$, with $r \leq r(1) < k$, we know that $v_jv_k$ is adjacent to $v_1v_j$ in $\widehat{G}$ since $\{v_1,v_k\} \notin E(G)$.
Thus $v_jv_k \notin I_{\widehat{G}}$ which implies that every node $v_{j'}v_{k}$ with $j' \leq j$ does not belong to $I_{\widehat{G}}$.
This particularly means that $v_jv_k \in I(G, V_i, r)$ only if $k \leq r \leq r(1)$.
Therefore given an optimal solution by removing $v_1$ for all such possible values for $j$, we obtain the desired solution.
\end{proof}

Now we are equipped with our necessary tools in order to obtain our main result, namely a polynomial-time algorithm that solves the
{\sc MaxSTC} problem on a proper interval graph $G$.

\begin{theorem}\label{theo:algoproperinterval}
There is a polynomial-time algorithm that computes the {\sc MaxSTC} of a proper interval graph.
\end{theorem}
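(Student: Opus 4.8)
The plan is to convert the recurrence of Lemma~\ref{lem:optreccurence} into a dynamic programming algorithm, and to bound both the number of distinct subproblems and the cost of evaluating each one. First I would carry out the preprocessing justified by the earlier results: compute a proper interval ordering $\sigma = \langle v_1, \ldots, v_n \rangle$ in linear time, contract each twin class to a single representative as in Lemma~\ref{lem:twins}, assign to every surviving edge the weight equal to the product of the sizes of the two twin classes of its endpoints, and record separately the $\sum_x |W_x|(|W_x|-1)/2$ intra-class nodes of $S(W)$ that Lemma~\ref{lem:twins} adds back to the final solution. After this step I may assume $G$ is twin-free with positive weights, and by Proposition~\ref{prop:MaxSTCIndependent} the quantity I must compute is the maximum weighted independent set $A(G)$ of $\widehat{G}$, which by Lemma~\ref{lem:consecutivestrong} is attained by a solution with the consecutive strong property and hence, by Lemma~\ref{lem:optreccurence}, equals $A(G, \{v_1\}, r(1))$.

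The key structural observation is that every graph appearing on the right-hand side of the recurrence is obtained from $G$ only by deleting $v_1$ or a prefix $V_i$, so each such graph is an induced \emph{suffix} $G[\{v_p, \ldots, v_n\}]$ of $\sigma$; since induced subgraphs of proper interval graphs are proper interval and the restriction of $\sigma$ remains a proper interval ordering, the recurrence applies verbatim to every such suffix. Consequently a subproblem is completely specified by a triple of indices $(p, q, r)$ with $p \le q \le r \le r(p)$, where $p$ is the left endpoint of the current suffix, $q$ is the largest index of the committed prefix clique, and $r$ is the right boundary; here $r(p)$ is the index of the rightmost neighbour of $v_p$, and the monotonicity $v_{\ell(i)} \preceq v_{\ell(j)}$ and $v_{r(i)} \preceq v_{r(j)}$ for $v_i \prec v_j$ guarantees that these values stay consistent across suffixes. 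This yields $O(n^3)$ distinct states.

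I would then evaluate the table bottom-up. The base case $i = r = n$ sets the entry to $C(V_i)$; the case $i = r < n$ makes a single jump to the suffix starting at $v_{i+1}$ with the fresh boundary $r(i+1)$; and the case $i < r$ takes a maximum over the $O(n)$ indices $j$ with $i \le j \le r$ of the already computed values on the suffix starting at $v_{p+1}$, each augmented by the precomputed weight $B(V_j)$. Precomputing all the sums $B(\cdot)$ and $C(\cdot)$ by prefix sums over the clique weights costs $O(n^2)$, after which each of the $O(n^3)$ states is filled in $O(n)$ time, giving an overall running time polynomial in $n$ (a crude bound is $O(n^4)$). Adding back the contribution of $S(W)$ recorded in preprocessing gives $A(G)$, and the corresponding independent set of $\widehat{G}$ is recovered by the usual backpointers; by Proposition~\ref{prop:MaxSTCIndependent} this set of nodes is an optimal {\sc MaxSTC} labelling of the original graph.

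The main obstacle I expect is not the arithmetic but the bookkeeping that keeps the state space polynomial: I must verify carefully that the only graphs produced by the recursion are the suffixes $G[\{v_p,\ldots,v_n\}]$, that a subproblem really is pinned down by the three integers $(p,q,r)$ rather than by an arbitrary committed set, and that the boundary index $r$ and the values $r(p)$ transform correctly when $v_1$ or a whole prefix is deleted. Establishing this reduction of the abstract recurrence to a three-index table, together with an evaluation order that respects the dependencies, is the technical heart of the argument; once it is in place, polynomiality is immediate.
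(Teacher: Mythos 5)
Your proposal is correct and follows essentially the same route as the paper: the same preprocessing (proper interval ordering, twin contraction via Lemma~\ref{lem:twins}), dynamic programming on the recurrence of Lemma~\ref{lem:optreccurence}, and correctness from Proposition~\ref{prop:MaxSTCIndependent} together with Lemmata~\ref{lem:consecutivestrong} and~\ref{lem:optreccurence}. Your explicit three-index state space $(p,q,r)$ over suffixes is just a more spelled-out version of the paper's count of at most $n^2$ instances per starting vertex, and your $O(n^4)$ bound (after $O(n^2)$ precomputation of $B$ and $C$) is compatible with, indeed slightly sharper in form than, the paper's stated $O(n^3 m)$.
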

\begin{proof}
Let $G$ be a proper interval graph on $n$ vertices and $m$ edges.
We first compute its proper interval ordering $\sigma$ in linear time \cite{LO934}.
Then we compute its twin sets by using the fact that $u$ and $v$ are twins if and only if $\ell(u)=\ell(v)$ and $r(u)=r(v)$.
Contracting the twin sets according to Lemma~\ref{lem:twins} results in a proper interval graph in which every vertex is associated with a positive weight.
In order to compute the optimal solution $A(G)$ we use a dynamic programming approach based on its recursive formulation given in Lemma~\ref{lem:optreccurence}.
Correctness follows from Proposition~\ref{prop:MaxSTCIndependent} and Lemmata~\ref{lem:consecutivestrong} and \ref{lem:optreccurence}.

Regarding the running time, notice that given the ordering $\sigma$ we can remove the twin vertices in linear time.
All instances of $A(G, V_i, r)$ can be computed as follows.
Given the first vertex $v_1$ we compute all possible sets $V_i$ which are bounded by $n$.
Since $r \leq r(1) \leq n$, the number of instances $A(G, V_i, r)$ generated by $v_1$ is at most $n^2$.
Also observe that computing the values $B(V_i)$ and $C(V_i)$ takes $O(m)$ time.
Therefore the total running time of the algorithm is $O(n^3m)$.
\end{proof}

\section{Concluding remarks}
Given the first study with positive and negative results for the {\sc MaxSTC} problem on restricted input, there are some interesting open problems.
Despite the structural properties that we proved for the solution on proper interval graphs, the complexity of {\sc MaxSTC} on interval graphs is still open.
Determining the complexity of {\sc MaxSTC} for other graph classes towards AT-free graphs seems interesting direction for future work.
More precisely, by Proposition~\ref{prop:MaxSTCIndependent} it is interesting to consider the line-incompatibility graph of comparability graphs since they admit a well-known similar characterization \cite{KrMcMeSp06}.
Moreover it is natural to characterize the graphs for which their line-incompatibility graph is perfect.
Such a characterization will lead to to further polynomial cases of {\sc MaxSTC}, since the maximum independent set of perfect graphs admits a polynomial solution \cite{GLS84}.
A typical example is the class of bipartite graphs for which their line graph coincides with their line-incompatibility graph and it is known that the line graph of a bipartite graph is perfect (see for e.g., \cite{BraLeSpi99}).
As we show next, another paradigm of this type is the class of trivially-perfect graphs.

A graph $G$ is called {\em trivially-perfect} (also known as {\em quasi-threshold}) if for each induced subgraph $H$ of $G$, the number
of maximal cliques of $H$ is equal to the maximum size of an independent set of $H$.
It is known that the class of trivially-perfect graphs coincides with the class of $(P_4,C_4)$-free graphs, that is every trivially-perfect graph has no induced $P_4$ or $C_4$ \cite{Gol78}.
A {\em cograph} is a graph without an induced $P_4$, that is a cograph is a $P_4$-free graph.
Hence trivially-perfect graphs form a subclass of cographs.

\begin{theorem}\label{theo:trivially-perfect}
The line-incompatibility graph of a trivially-perfect graph is cograph.
\end{theorem}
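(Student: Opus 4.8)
The plan is to show that $\widehat{G}$, the line-incompatibility graph of a trivially-perfect graph $G$, contains no induced $P_4$, since cographs are exactly the $P_4$-free graphs. So I would suppose for contradiction that four nodes of $\widehat{G}$, say $ab, bc, cd, de$ (naming them by their corresponding edges of $G$), form an induced path on four nodes, and derive a contradiction using the fact that $G$ is $(P_4,C_4)$-free. Here the consecutive adjacencies in $\widehat{G}$ mean the corresponding edge-pairs of $G$ induce a $P_3$, while the non-adjacencies mean the corresponding edge-pairs do \emph{not} induce a $P_3$ (i.e., they are either non-adjacent edges, or adjacent edges whose endpoints form a triangle in $G$).

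First I would set up the correspondence carefully. Let the four nodes of the alleged induced $P_4$ in $\widehat{G}$ be $e_1 - e_2 - e_3 - e_4$, where each $e_i$ is an edge of $G$. The adjacency $e_1 e_2$ in $\widehat{G}$ forces $e_1,e_2$ to share a vertex and induce a $P_3$ in $G$; similarly for $e_2 e_3$ and $e_3 e_4$. The three non-adjacencies $e_1 e_3$, $e_1 e_4$, $e_2 e_4$ each give information: a non-adjacency between two edges sharing a vertex means those three vertices form a triangle in $G$, while a non-adjacency between vertex-disjoint edges means no common vertex. The main work is a careful case analysis on how the shared vertices of the consecutive edges $e_1,e_2,e_3,e_4$ are arranged, tracking which vertices coincide. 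In each configuration I would exhibit either an induced $P_4$ or an induced $C_4$ among the (at most six, but typically four or five) distinct vertices underlying $e_1,\dots,e_4$, contradicting that $G$ is $(P_4,C_4)$-free.

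The hard part will be managing the combinatorial bookkeeping of the shared-vertex configurations without missing a case, since the consecutive $P_3$'s can overlap in several ways (the vertex shared by $e_1,e_2$ may or may not be the vertex shared by $e_2,e_3$, and so on). I expect the cleanest route is to first use the two $P_3$'s $e_1,e_2$ and $e_2,e_3$ to pin down that $e_2$ is the ``middle'' edge of a small configuration, then argue that the non-adjacency $e_1 e_3$ in $\widehat{G}$ restricts the endpoints so that a four-vertex obstruction is forced once $e_4$ is attached via the $P_3$ on $e_3,e_4$ together with the non-adjacencies $e_1 e_4$ and $e_2 e_4$. The key leverage is that a non-adjacency in $\widehat{G}$ between two \emph{incident} edges supplies a triangle edge in $G$, and it is precisely these extra edges, combined with the missing edges forced by the $P_3$'s, that conspire to create an induced $P_4$ or $C_4$ in $G$.

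Since every induced subgraph of a cograph is again a cograph and $P_4$-freeness is hereditary, it suffices to rule out the induced $P_4$ on exactly four nodes as above; this is what the case analysis accomplishes, and by the characterization of cographs as the $P_4$-free graphs, $\widehat{G}$ is therefore a cograph. As a sanity check on the statement, I would verify it is consistent with the observation that the maximum independent set of a cograph is polynomially computable, which is exactly what is needed to conclude (as the authors indicate in the concluding remarks) that \textsc{MaxSTC} is polynomial on trivially-perfect graphs.
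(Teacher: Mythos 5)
Your overall strategy coincides with the paper's: prove $\widehat{G}$ is $P_4$-free by assuming an induced path $e_1 - e_2 - e_3 - e_4$ in $\widehat{G}$ and extracting an induced $P_4$ or $C_4$ in $G$, with the key leverage (which you correctly identify) that a non-adjacency in $\widehat{G}$ between two \emph{incident} edges of $G$ forces the closing triangle edge. But the proposal has a genuine gap: the entire substance of the proof is the case analysis you defer (``the hard part will be managing the combinatorial bookkeeping''), and you never actually exhibit the obstruction in any configuration. The organizing idea that collapses the bookkeeping---and which is missing from your plan---is to first classify the induced $P_3$'s of $\widehat{G}$. If $e_1, e_2, e_3$ form a $P_3$ in $\widehat{G}$ whose consecutive pairs attach at \emph{different} endpoints of $e_2$, say $e_1 = v_1v_2$, $e_2 = v_2v_3$, $e_3 = v_3v_4$, then the two $P_3$'s in $G$ give $\{v_1,v_3\}, \{v_2,v_4\} \notin E(G)$, which also forces $v_4 \neq v_1$, and then $v_1, v_2, v_3, v_4$ induce a $P_4$ or a $C_4$ in $G$ depending on whether $\{v_1,v_4\} \in E(G)$---a contradiction before $e_4$ even enters the picture. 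Hence every $P_3$ in $\widehat{G}$ is a star $v_1x, v_2x, v_3x$ on a common vertex $x$. (Incidentally, your opening choice of names $ab, bc, cd, de$ silently presumes exactly this impossible path configuration; you back off from it later, but it signals that the star structure had not been isolated.)

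With that classification the $P_4$ analysis reduces to a single case: the two sub-$P_3$'s of $e_1 - e_2 - e_3 - e_4$ must share their center, since $e_2$ and $e_3$ are distinct edges of $G$ and so meet in exactly one vertex; thus the $P_4$ is $v_1x, v_2x, v_3x, v_4x$ with $v_1, v_2, v_3, v_4$ distinct. The three adjacencies give $\{v_1,v_2\}, \{v_2,v_3\}, \{v_3,v_4\} \notin E(G)$, and the three non-adjacencies---now all between incident edges---give $\{v_1,v_3\}, \{v_2,v_4\}, \{v_1,v_4\} \in E(G)$, so $v_3, v_1, v_4, v_2$ induce a $P_4$ in $G$, contradicting $(P_4,C_4)$-freeness. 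So your approach is sound and, once completed, would essentially be the paper's proof; but until the case analysis is carried out (or short-circuited by the $P_3$ classification above), the proposal is a plan rather than a proof, and it is precisely the unexecuted part that contains all the content.
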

\begin{proof}
Let $G$ be a trivially-perfect graph, that is $G$ is a $(P_4,C_4)$-free graph.
We will show that the line-incompatibility graph $\widehat{G}$ of $G$ is a $P_4$-free graph.
Consider any $P_3$ in $\widehat{G}$.
Due to the construction of $\widehat{G}$, the $P_3$ has one of the following forms: (i) $v_1v_2, v_2v_3, v_3v_4$ or (ii) $v_1x, v_2x, v_3x$.
We prove that the $P_3$ has the second form because $G$ has no induced $P_4$ or $C_4$.
If (i) applies then $\{v_1,v_3\}, \{v_2,v_4\} \notin E(G)$ and $\{v_1,v_2\}, \{v_2,v_3\}, \{v_3,v_4\} \in E(G)$ which implies that $v_4 \neq v_1$.
Thus $G$ contains a $P_4$ or a $C_4$ depending on whether there is the edge $\{v_1,v_4\}$ in $G$.
Hence every $P_3$ in $\widehat{G}$ has the form $v_1x, v_2x, v_3x$ where $v_1, v_2, v_3, x$ are distinct vertices of $G$.
Now assume for contradiction that $\widehat{G}$ contains a $P_4$.
Then the $P_4$ is of the form $v_1x, v_2x, v_3x, v_4x$ because it contains two induced $P_3$'s.
The structure of the $P_4$ implies that $\{v_1,v_2\}, \{v_2,v_3\}, \{v_3,v_4\} \notin E(G)$ and $\{v_1,v_3\}, \{v_2,v_4\}, \{v_4,v_1\} \in E(G)$.
This however shows that the vertices $v_3,v_1,v_4,v_2$ induce a $P_4$ in $G$ leading to a contradiction that $G$ is a $(P_4,C_4)$-free graph.
Therefore $\widehat{G}$ is a $P_4$-free graph.
\end{proof}
By Theorem~\ref{theo:trivially-perfect} and the fact that the maximum independent set of a cograph can be computed in linear time \cite{CPS81}, {\sc MaxSTC} can be solved in polynomial time on trivially-perfect graphs.
We would like to note that the line-incompatibility graph of a cograph or a proper interval graph is not necessarily a perfect graph.

More general there are extensions and variations of the {\sc MaxSTC} problem that are interesting to consider as proposed in \cite{ST14}.
A natural graph modification problem is the problem of adding the minimum number of edges that satisfy the strong triadic closure having as few as possible weak edges in the original graph.
More formally the objective is to add $F$ edges in $G$ so that the resulting graph satisfies the strong triadic closure with the minimum number of $|F|+|E_W|$.





\begin{thebibliography}{10}

\bibitem{ASM16}
A. B. Adcock, B. D. Sullivan, M. W. Mahoney.
\newblock Tree decompositions and social graphs.
\newblock {\em Internet Mathematics}, 12:315--361, 2016.

\bibitem{BK14}
L. Backstrom and J. Kleinberg.
\newblock Romantic partnerships and the dispersion of social ties: a network analysis of relationship status on facebook.
\newblock In {\em Proceedings of CSCW 2014}, pp. 831--841, 2014.


\bibitem{BDM15}
F. Bonomo, G. Dur{\'{a}}n, and M. Valencia{-}Pabon.
\newblock Complexity of the cluster deletion problem on subclasses of chordal graphs.
\newblock {\em Theoretical Computer Science}, 600:59-–69, 2015.

\bibitem{BraLeSpi99}
A.~Brandst\"{a}dt, V.B.~Le, and J.~Spinrad.
\newblock {\it Graph Classes -- a Survey}.
\newblock SIAM Monographs in Discrete Mathematics and Applications, SIAM, Philadelphia, 1999.

\bibitem{CPS81}
D.G.~Corneil, Y.~Perl, and L.K.~Stewart.
\newblock Complement reducible graphs.
\newblock {\em Discrete Applied Mathematics}, 3:163--174, 1981.

\bibitem{DHH96}
X. Deng, P. Hell, and J. Huang.
\newblock Linear-time representation algorithms for proper circular-arc
graphs and proper interval graphs.
\newblock {\em SIAM J. Comput.}, 25:390--403, 1996.

\bibitem{EK10}
D. Easley and J. Kleinberg.
\newblock {\it Networks, Crowds, and Markets: Reasoning About a Highly Connected World}.
\newblock Cambridge University Press, 2010.

\bibitem{Edmonds65}
J.~Edmonds.
\newblock Paths, trees and flowers.
\newblock {\em Canad. J. Math.}, 17:449--467, 1965.



\bibitem{Gol78}
M.C. Golumbic.
\newblock Trivially perfect graphs.
\newblock {\em Discrete Mathematics}, 24:105–-107, 1978.

\bibitem{GLS84}
M. G\"{o}tschel, L. Lov\'{a}sz, and A. Scrijver.
\newblock Polynomial algorithms for perfect graphs.
\newblock {\em Annals of Discrete Mathematics}, 21:325--356, 1984.

\bibitem{Go73}
M. Granovetter.
\newblock The strength of weak ties.
\newblock {\em American Journal of Sociology}, 78:1360--1380, 1973.


\bibitem{HLNPT10}
P. Heggernes, D. Lokshtanov, J. Nederlof, C. Paul, and J. A. Telle.
\newblock Generalized graph clustering: recognizing $(p,q)$-cluster graphs.
\newblock In {\em Proceedings of WG 2010}, pp. 171--183, 2010. 

\bibitem{Iba09}
L. Ibarra.
\newblock The clique-separator graph for chordal graphs.
\newblock {\em Discrete Applied Mathematics}, 157:1737-–1749, 2009.

\bibitem{Jackson08}
M. O. Jackson.
\newblock {\it Social and economic networks}.
\newblock Princeton University press, vol. 3, 2008.

\bibitem{Karp72}
R. M. Karp.
\newblock Reducibility among combinatorial problems.
\newblock {\em Complexity of Computer Computations}, 85--103, 1972.


\bibitem{KleitmanV90}
D.~J. Kleitman and R.~V. Vohra.
\newblock Computing the bandwidth of interval graphs.
\newblock {\em SIAM J. Disc. Math.}, 3:373--375, 1990.

\bibitem{KrMcMeSp06}
D.~Kratsch, R.~M. McConnell, K.~Mehlhorn, and J.~P. Spinrad.
\newblock Certifying algorithms for recognizing interval graphs and permutation graphs.
\newblock {\em SIAM J. Comput.}, 36: 326--353, 2006.

\bibitem{Lau06}
L. C. Lau.
\newblock Bipartite roots of graphs.
\newblock {\em ACM Transactions on Algorithms}, 2:178--208, 2006.

\bibitem{LC04}
L. C. Lau and D. G. Corneil.
\newblock Recognizing powers of proper interval, split, and chordal graphs.
\newblock {\em SIAM J. Disc. Math.}, 18: 83--102, 2004.

\bibitem{LR88}
F.T.~Leighton and S.~Rao.
\newblock An approximate max-flow min-cut theorem for uniform multicommodity flow problems
with applications to approximation algorithms.
\newblock In {\em Proceedings of FOCS 1988}, pp. 422--431, 1998. 

\bibitem{LO934}
P. J. Looges and S. Olariu.
\newblock Optimal greedy algorithms for indifference graphs.
\newblock {\em Computers \& Mathematics with Applications}, 25:15–25, 1993.


\bibitem{MS13}
M. Milani\v{c} and O. Schaudt.
\newblock Computing square roots of trivially perfect and threshold graphs.
\newblock {\em Discrete Applied Mathematics}, 161:1538-–1545, 2013.

\bibitem{Pfaltz13}
J. L. Pfaltz.
\newblock Chordless cycles in networks.
\newblock In {\em Proceedings of ICDE Workshops}, pp. 223--228, 2013. 

\bibitem{R69}
 F. S. Roberts.
\newblock Indifference Graphs.
\newblock In {\em Proof Techniques in Graph Theory}, Academic Press, New York, 139--146, 1969. 

\bibitem{ST14}
S. Sintos and P. Tsaparas.
\newblock Using strong triadic closure to characterize ties in social networks,
\newblock In {\em Proceedings of KDD 2014}, pp. 1466--1475, 2014. 

\bibitem{UBK13}
J. Ugander, L. Backstrom, and J. Kleinberg.
\newblock Subgraph frequencies: Mapping the empirical and extremal geography of large graph collections.
\newblock In {\em Proceedings of WWW 2013}, pp. 1307--1318, 2013. 



\end{thebibliography}
\end{document}